\newtheorem{thm}{Theorem}
\newtheorem{lem}[thm]{Lemma}
\newtheorem{prop}[thm]{Proposition}
\newtheorem{cor}[thm]{Corollary}
\newtheorem{defini}[thm]{Definition}
\newenvironment{defi}{\begin{defini}\rm}{\end{defini}}
\newenvironment{ex}{\begin{example}\rm}{\end{example}}
\newtheorem{rema}[thm]{Remark}
\newenvironment{rem}{\begin{rema}\rm}{\end{rema}}
\newcommand{\field}[1]{\mathbb{#1}}
\newcommand{\F}{\field{F}}
\newcommand{\cA}{{\mathcal A}}
\newcommand{\cC}{{\mathcal C}}
\newcommand{\cR}{{\mathcal R}}
\newcommand{\PG}{\mathcal{P}_{q}(n)}
\newcommand{\Gr}{\mathcal{G}_{q}(k,n)}
\newcommand{\G}{\mathcal{G}_{q}(k,n)}
\newcommand{\Uvs}{\mathcal{U}}
\newcommand{\Vvs}{\mathcal{V}}
\newcommand{\Cvs}{\mathcal{C}}
\newcommand{\Rvs}{\mathcal{R}}
\newcommand{\rs}{\mathrm{rs}}
\newcommand{\Gauss}[2]{
\left[\begin{array}
{c}#1\\#2\end{array}\right]_{q}
}
\begin{document}

\title{On the Geometry of Balls in the Grassmannian and List Decoding
  of Lifted Gabidulin Codes}

\author{Joachim Rosenthal \and Natalia Silberstein \and Anna-Lena
  Trautmann \thanks{J. Rosenthal and A.-L. Trautmann were partially
    supported by Swiss National Science Foundation Grant no.\ 138080.
    A.-L. Trautmann was partially supported by Forschungskredit of the
    University of Zurich, grant no.\ 57104103, and Swiss National Science Foundation Fellowship no.\ 147304.} \thanks{Parts of this
    work were presented at the International Workshop on Coding and
    Cryptography 2013 in Bergen, Norway, and appear in its proceedings
    \cite{tr13p}.}  }


\institute{J. Rosenthal \at
  Institute of Mathematics, University of Zurich, Switzerland\\
  \email{rosenthal@math.uzh.ch} 
\and A.-L. Trautmann \at
    Department of Electrical and Electronic Engineering, University of Melbourne, Australia\\
      \email{anna-lena.trautmann@unimelb.edu.au} 
  \and N. Silberstein \at Department of Computer Science,
  Technion --- Israel Institute of Technology, Haifa, Israel\\
  \email{natalys@cs.technion.ac.il}  }


\maketitle

\begin{abstract}
  The finite Grassmannian $\G$ is defined as the set of all
  $k$-dimensional subspaces of the ambient space $\F_{q}^{n}$.
  Subsets of the finite Grassmannian are called constant dimension
  codes and have recently found an application in random network
  coding. In this setting codewords from $\G$ are sent through a
  network channel and, since errors may occur during transmission, the
  received words can possibly lie in $\mathcal{G}_{q}(k',n)$, where
  $k'\neq k$.

  In this paper, we study the balls in $\G$ with center that is not
  necessarily in $\G$. We describe the balls with respect to two
  different metrics, namely the subspace and the injection metric.
  Moreover, we use two different techniques for describing these
  balls, one is the Pl\"ucker embedding of $\G$, and the second one is
  a rational parametrization of the matrix representation of the
  codewords.

  With these results, we consider the problem of list decoding a
  certain family of constant dimension codes, called lifted Gabidulin
  codes.  We describe a way of representing these codes by linear
  equations in either the matrix representation or a subset of the
  Pl\"ucker coordinates. The union of these equations
    and the linear and bilinear equations which arise from the
    description of the ball of a given radius provides an explicit
    description of the list of codewords with distance less than or
  equal to the given radius from the received word.
\end{abstract}

 \keywords{Grassmannian \and projective space \and subspace codes \and
network coding \and list decoding} \subclass{11T71,14G50}

\section{Introduction}

Let $\F_q$ be a finite field of size $q$ and let $k,n$ be two integers
satisfying $0 \le k \le n$.  The \textit{Grassmannian space}
(Grassmannian, in short), denoted by $\Gr$, is the set of all
$k$-dimensional subspaces of the vector space~\smash{$\F_q^n$}. Let
$\Uvs,\Vvs\subset \F_q^n$ be two different subspaces in $\Gr$.  The
\textit{subspace distance} is defined by
\begin{equation}
  \label{def_subspace_distance}
  d_S (\Uvs,\Vvs) =  \dim(\Uvs)+\dim(\Vvs) -2 \dim\bigl( \Uvs\, {\cap}\Vvs\bigr).
\end{equation}
A subset $\cC$ of $\Gr$ is called an $(n,M,d,k)_q$ \emph{constant
  dimension code} if it has size $M$ and if the minimum pairwise
subspace distance between any two different subspaces of $\cC$ is $d$.

Constant dimension codes gained a lot of interest due to the work by
K\"otter and Kschischang~\cite{KK} who showed that such codes are very
useful for error-correction in random network coding. They proved that
an $(n,M,d,k)_q$ code can correct any $\mu$ packet insertions
(which is equivalent to an increase of dimension by $\mu$ in the transmitted subspace) and $\epsilon$ packet deletions (which is equivalent to a decrease of dimension by $\epsilon$)
introduced anywhere in the network as long as $2\mu + 2\epsilon < d$. This
application has motivated extensive work in the
area~\cite{BoGa09,EtSi09,EV08,GaYa10,GNW12,GuXi12,KoKu08,MGR08,MaVa12,MaVa10,ro12b,SiEt09,SiEt10,Ska10,TMR10,TrRo10}.
In~\cite{KK} K\"otter and Kschischang gave a Singleton-like upper
bound on the size of such codes and presented a Reed-Solomon-like code
which asymptotically attains this bound.  Silva, K\"otter, and
Kschischang~\cite{si08j} showed how this construction can be described
in terms of lifted Gabidulin codes~\cite{ga85a}. The generalizations
of this construction and the decoding algorithms were presented
in~\cite{BoGa09,EtSi09,MGR08,ro12b,Ska10,TrRo10}. Another type of
construction (orbit codes) can be found in~\cite{EV08,KoKu08,TMR10}.

In this paper we focus on describing the balls of a given radius in
the Grassmannian around an arbitrary element of the respective
projective space. This is exactly what is needed to come up with list
decoding algorithms for constant dimension codes.  Then we focus on
list decoding of lifted Gabidulin codes.
For the classical Gabidulin codes it was recently shown by
Wachter-Zeh~\cite{WZ12} that, if the radius of the ball around a
received word is at least the Johnson radius, no polynomial-time
list decoding is possible, since the list size can be exponential.
Algebraic list decoding algorithms for folded Gabidulin codes were
discussed in~\cite{GNW12,MaVa12}. The constructions of subcodes of
(lifted) Gabidulin codes and their algebraic list decoding algorithms
were presented in~\cite{GW13,GuXi12,MaVa10,WZ13}.

One approach in this paper for list decoding codes in the Grassmannian
is to apply the techniques of Schubert calculus over finite fields,
i.e. to represent subspaces in the Grassmannian by their Pl\"ucker
coordinates.  It was proven in~\cite{ro12b} that a ball of a given
radius (with respect to the subspace distance) around a subspace can
be described by explicit linear equations in the Pl\"ucker embedding.
In this work we extend this result to the injection distance, which is
interesting for the case when a ball around a subspace of a different
dimension $k'\neq k$ is considered. Also, we describe a way of
representing a subset of the Pl\"ucker coordinates of lifted Gabidulin
codes as linear block codes, which results in additional linear
(parity-check) equations. The solutions of all these
  linear equations combined with the bilinear equations defining the
  Grassmannian in the Pl\"ucker embedding will constitute the
  resulting list of codewords. Another approach considered in this
paper is the description of the balls (for both the subspace and the
injection distance) around a subspace by bilinear equations from a
rational parametrization of the matrix representation of elements of
$\Gr$.

The paper is organized as follows. In Section~\ref{sec:preliminaries}
we review the Pl\"ucker embedding of the Grassmannian $\Gr$.
In Section~\ref{sec22} we describe the balls of radius $t$ around some
subspace of $\F_q^n$. We give the defining equations in Pl\"ucker
coordinates and also describe a rational parametrization which will
make the algorithmic computation for many list decoding problems
easier.  Section~\ref{sec:LG} contains the description of the lifted
Gabidulin codes as linear block codes.
Finally Section~\ref{sec:list_decoding} contains two list decoding
algorithms where we show how the set of equations describing a ball of
some radius and the equations describing the lifting of the Gabidulin
code can be computed.  Conclusions and problems for future research
are given in Section~\ref{sec:conclusions}.

\section{Preliminaries and Notations}
\label{sec:preliminaries}

We denote by $GL_{n}$ the general linear group over $\F_{q}$, by
$S_{n}$ the symmetric group on $n$ elements.
With $\mathbb P^{n}$ we denote the projective space of dimension $n$
over $\F_{q}$.

We represent some $\Uvs \in \G$ by the row space of a matrix $U\in
\F_{q}^{k\times n}$, where we use the notation $\rs(U)$ for the row
space of $U$. $GL_{n}$ acts on $\G$ as follows:
\begin{align*}
\G \times GL_{n} &\rightarrow \G\\
(\rs(U) , A) & \mapsto \rs(UA)   .
\end{align*}

Let $p(x)=\sum p_i x^i\in \F_q[x]$ be a monic and irreducible
polynomial of degree $\ell$, and $\alpha$ be a root of $p(x)$. Then it
holds that $\F_{q^{\ell}} \cong \F_q[\alpha]$. We denote the vector
space isomorphism between the extension field $\F_{q^{\ell}}$ and the
vector space $\F_q^{\ell}$ by
\begin{align*}
  \phi^{(\ell)} : \quad\F_{q^{\ell}} & \longrightarrow \F_q^{\ell} \\
  \sum_{i=0}^{\ell-1} \lambda_i \alpha^i &\longmapsto
  (\lambda_0,\dots,\lambda_{\ell-1}) .
\end{align*}
%
Moreover, we need the following notations:
The set of ordered multiindices of length $k$ with elements from $\{1,2,\dots,n\}$ is denoted by
\[\binom{[n]}{k} := \{(x_{1},\dots, x_{k}) \mid x_{i} \in
\{1,2,\dots,n\}, x_{1}<\dots<x_{k}\}, \] and for a matrix $A$ we
denote its $i$-th row by $A[i]$, its $i$-th column by $A_i$, and the
entry in the $i$-th row and the $j$-th column by $A_{i,j}$.

\begin{ex}
\[\binom{[4]}{2} = \{(1,2), (1,3), (1,4), (2,3), (2,4), (3,4)\}\]
\end{ex}

\begin{defi}
  The \emph{Bruhat order} on the set $\binom{[n]}{k}$, is defined as
  \[(i_{1},...,i_k) \preceq (j_{1},...,j_{k}) \iff i_{l} \leq j_{l}
  \quad\forall l \in \{1,\dots,k\}. \] The \emph{lexicographic order}
  is defined as,
$$
(i_{1},\dots, i_{k}) < (j_{1},\dots, j_{k})\iff \exists 0\leq N\leq k
: i_{m}=j_{m} \forall m \leq N \textnormal{ and } i_{N+1}< j_{N+1}.
$$
One notes that the Bruhat order is a partial order and the
lexicographic order is a total order on $\binom{[n]}{k}$.
\end{defi}

\begin{ex}According to the Bruhat order it holds that
  $(1,2,7)\preceq(2,3,7)$. But the fact that $(2,4,6)\not
  \preceq(2,3,7)$ does not imply that $(2,3,7)\prec(2,4,6)$. These two
  tuples are not comparable. In the lexicographic order it holds that
  $(1,2,7) < (2,3,7)$ and $(2,3,7)<(2,4,6)$.
\end{ex}

We denote by $\PG$ the set of all subspaces of $\F_q^n$, i.e.,
$$
\PG:=\bigcup_{0\leq k\leq n}\G.
$$

\begin{defi}
  Let $\Uvs, \Vvs\in \PG$ be two subspaces.  The \emph{subspace
    distance} is defined as
  \[
  d_S(\Uvs, \Vvs) = \dim(\Uvs) + \dim(\Vvs) - 2\dim(\Uvs\cap \Vvs)
  \]
  and the \emph{injection distance} is definded as
  \[
  d_I(\Uvs, \Vvs) = \max(\dim(\Uvs), \dim(\Vvs)) - \dim(\Uvs\cap \Vvs)
  .
  \]
\end{defi}

Clearly both distance functions describe a metric in the usual way.
One also notes that for $\Uvs, \Vvs \in \G$ it holds that
$d_S(\Uvs,\Vvs)= 2d_I(\Uvs,\Vvs)$. Moreover, for constant dimension
codes a unique subspace distance decoder is equivalent to a unique
injection distance decoder \cite{tr13phd}.
For list decoding we will derive a similar relation between
the two metrics in Proposition \ref{prop:relation}.

\begin{defi}
  We define the balls in $\G$ with subspace radius $\tau$ around an
  arbitrary element $\Rvs \in \PG$ as
  \[B^k_{S,\tau}(\Rvs) := \{\Vvs \in \G \mid d_S(\Rvs, \Uvs) \leq
  \tau\}.\] Analogously we define the balls in $\G$ with injection
  radius $t$ around an arbitrary element $\Rvs \in \PG$ as
  \[{B}^k_{I,t}(\Rvs) := \{\Vvs \in \G \mid d_I(\Rvs, \Uvs)
  \leq t\}.\]
\end{defi}


The Pl\"ucker embedding of the Grassmannian is a useful tool when
studying $\G$.  The basic idea of using the Pl\"ucker embedding for
list decoding of subspace codes was already stated in
\cite{ro12b,tr12a}. We will now recall the main definitions and
theorems from those works. The proofs of the results can also be found
in there. For more information or a more general formulation of the
Pl\"ucker embedding and its applications the interested reader is
referred to \cite{ho52}.

\begin{rem}
  The condition $d_S(\Rvs, \Uvs) \leq\tau$ (respectively $d_I(\Rvs, \Uvs)
  \leq t$)
  translates into the condition that a subspace $\Uvs$ should
  intersect the received space $\Rvs$ in at least a certain dimension.
  Geometrically this describes a so-called ``Schubert condition'' and
  actually both $B^k_{S,\tau}(\Rvs)$ and ${B}^k_{I,t}(\Rvs)$ have
  the structure of a so called ``Schubert variety''. Readers familiar
  with Schubert calculus as described in~\cite{ho52} will readily
  recognize this and it will not come as a surprise that the Pl\"ucker
  equations which describe the balls will turn out to be linear.
  In order to keep the paper as self contained as possible
we will derive in this paper the relevant equations.
\end{rem}

Let $U\in \F_{q}^{k\times n}$ such that its row space $\rs (U)$
describes the subspace $\Uvs \in \Gr$. $M_{i_1,\dots,i_k}(U)$ denotes
the minor (i.e. the determinant of the submatrix) of $U$ given by the
columns $i_1,\dots,i_k$.  The Grassmannian $\Gr$ can be embedded into the
projective space $\mathbb{P}^{\binom{n}{k}-1}$ of dimension $\binom{n}{k} -1$ over $\F_q$ using the Pl\"ucker embedding:
\begin{align*}
  \varphi : \Gr  &\longrightarrow \mathbb{P}^{\binom{n}{k}-1} \\
  \rs (U) &\longmapsto [M_{1,...,k}(U) : M_{1,...,k-1,k+1}(U) :\ldots
  : M_{n-k+1,...,n}(U)].
\end{align*}
%
%
The $k\times k$ minors $M_{i_1,\ldots ,i_k}(U)$ of the matrix $U$ are
called the \emph{Pl\"ucker coordinates} of the subspace~$\Uvs$. By
convention, we order the minors lexicographically by the column
indices.

The image of this embedding describes indeed a variety and the
defining equations of the image are given by the so called
\emph{shuffle relations} (see e.g.~\cite{kl72,pr82}), which are
multilinear equations of monomial degree $2$ in terms of the Pl\"ucker
coordinates:
\begin{prop}[\cite{kl72,pr82}]           \label{shuffle}
  Consider $x:=[x_{1,\dots,k}:\dots:x_{n-k+1,\dots,n}]\in
  \mathbb{P}^{\binom{n}{k}-1}$. Then there exists a subspace $\Uvs \in
  \Gr$ such that $\varphi(\Uvs)=x$ if and only if
  \[\sum_{j\in\{i_{1},\dots,i_{k+1}\}} {\mathrm{sgn}({\sigma}_{j})}
  x_{i_{1},\dots,i_{k+1}\setminus j} x_{j,i_{k+2},\dots,i_{2k}} = 0\]
  $ \forall (i_1,\dots,i_{k+1}) \in \binom{[n]}{k+1},
  (i_{k+2},\dots,i_{2k}) \in \binom{[n]}{k-1}$, where
  $\mathrm{sgn}(\sigma_{j})$ denotes the sign of the permutation such
  that
$$
\sigma_{i_{\ell}}(i_{1},\dots,i_{k+1})=(i_{\ell},i_{1},\dots,i_{\ell-1},i_{\ell+1},\dots,i_{k+1})
.
$$
\end{prop}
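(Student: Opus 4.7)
The proposition has two directions: \emph{necessity}, that every Plücker image satisfies the shuffle relations, and \emph{sufficiency}, that every point of $\mathbb{P}^{\binom{n}{k}-1}$ satisfying them arises as the Plücker image of some $\Uvs \in \G$. The plan is to treat them separately, the first by a direct multilinear-algebra computation, the second by constructing a representative matrix in reduced row-echelon form on an affine chart and then inducting on ``how far'' a multiindex lies from the normalized one.

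For necessity, I would fix $U \in \F_q^{k \times n}$ with $\rs(U) = \Uvs$ and arbitrary index tuples $I = (i_1, \ldots, i_{k+1}) \in \binom{[n]}{k+1}$ and $J = (i_{k+2}, \ldots, i_{2k}) \in \binom{[n]}{k-1}$, and rewrite the sum $\sum_{j \in I} \mathrm{sgn}(\sigma_j)\, M_{I \setminus j}(U)\, M_{\{j\}\cup J}(U)$ as a Laplace-type expansion of a determinant that vanishes for rank reasons. Equivalently, in the exterior algebra language with $\omega := u_1 \wedge \cdots \wedge u_k \in \wedge^k \F_q^n$, the expression is a coefficient of $\iota_\alpha(\omega) \wedge \omega$ for a suitable $\alpha \in (\wedge^{k-1} \F_q^n)^*$, which is $0$ because any contraction of the decomposable $k$-form $\omega$ lies in the span of the $u_i$, and wedging such a vector against $\omega$ gives $0$.

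For sufficiency, I would work on the affine chart $\{x_{(1,\ldots,k)} \neq 0\}$; by the $S_n$-symmetry of the shuffle relations under column permutations, the general case reduces to this one. Normalize $x_{(1,\ldots,k)} = 1$ and construct a candidate matrix $U \in \F_q^{k \times n}$ in reduced row-echelon form with pivots in columns $1, \ldots, k$ by defining the non-pivot entries
\[
U_{i,j} := \varepsilon_{i,j}\, x_{(1, \ldots, \widehat{i}, \ldots, k, j)}, \qquad 1 \le i \le k,\ k < j \le n,
\]
with signs $\varepsilon_{i,j} \in \{\pm 1\}$ chosen so that $M_{(1, \ldots, \widehat{i}, \ldots, k, j)}(U) = x_{(1, \ldots, \widehat{i}, \ldots, k, j)}$. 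By construction the Plücker coordinates of $\Uvs := \rs(U)$ agree with $x$ in all ``short'' positions (multiindices differing from $(1, \ldots, k)$ in at most one entry). For a long multiindex $\tau \in \binom{[n]}{k}$ having $r \ge 2$ indices outside $\{1, \ldots, k\}$, I would select a pair $(I, J)$ so that in the corresponding shuffle relation the coordinate $x_\tau$ appears with coefficient $x_{(1,\ldots,k)} = 1$ while all other terms involve only coordinates of strictly smaller ``distance'' from $(1,\ldots,k)$. Then, applying the shuffle relations both to $x$ (by hypothesis) and to the Plücker coordinates of $U$ (by the already established necessity direction) and invoking the inductive hypothesis, one concludes $M_\tau(U) = x_\tau$.

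The main technical obstacle is the sign bookkeeping in the sufficiency argument: one must pick $(I, J)$ in each induction step so that a single long coordinate is isolated with coefficient $1$, and must simultaneously track the signs $\mathrm{sgn}(\sigma_j)$ from the permutation, the signs $\varepsilon_{i,j}$ used to define $U$, and the reordering signs implicit in the notation $M_{\{j\} \cup J}$ whenever $j$ must be sorted into $J$. Once a consistent convention is fixed, the induction itself is formal and the desired identity $M_\tau(U) = x_\tau$ falls out, completing the proof.
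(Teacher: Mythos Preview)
The paper does not supply its own proof of this proposition: it is quoted as a classical result with references to Kleiman--Laksov \cite{kl72} and Procesi \cite{pr82}, and no argument is given in the text. So there is nothing in the paper to compare your proposal against.

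That said, your sketch is the standard textbook route and is essentially correct. The necessity direction via decomposability of $\omega = u_1 \wedge \cdots \wedge u_k$ (contraction followed by wedging back against $\omega$ vanishes) is exactly how the Pl\"ucker relations are usually derived in the exterior-algebra formulation. The sufficiency direction---normalize on an affine chart, build a reduced row-echelon matrix from the ``distance-one'' coordinates, then induct on the number of indices outside $\{1,\dots,k\}$ using a carefully chosen shuffle relation to isolate each new coordinate---is likewise the classical argument (it appears, for instance, in the references the paper cites). You have correctly flagged the only genuinely fiddly part: the sign conventions for $\mathrm{sgn}(\sigma_j)$, the $\varepsilon_{i,j}$, and the implicit reordering in $x_{\{j\}\cup J}$ must be made mutually consistent, and in a fully written-out proof one fixes a single convention (typically via the exterior-algebra identity $\iota_\alpha(\omega)\wedge\omega=0$ expanded in coordinates) so that all three sources of sign collapse into one. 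With that bookkeeping done, the induction is indeed formal.
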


Then one can easily derive an upper bound on the number of shuffle
equations.

\begin{lem}\label{lem:num}
  There are at most $\binom{n}{k+1} \binom{n}{k-1}$ different
  (non-trivial) shuffle relations defining $\Gr$ in the Pl\"ucker
  embedding.
\end{lem}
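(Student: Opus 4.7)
The plan is to read the count directly off the indexing set in Proposition \ref{shuffle}. First I would note that each shuffle relation is parametrized by a pair consisting of a tuple $(i_1,\dots,i_{k+1}) \in \binom{[n]}{k+1}$ together with a tuple $(i_{k+2},\dots,i_{2k}) \in \binom{[n]}{k-1}$. Hence the number of relations produced by the proposition is bounded above by the cardinality of the Cartesian product $\binom{[n]}{k+1}\times\binom{[n]}{k-1}$, which equals $\binom{n}{k+1}\binom{n}{k-1}$.

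Next I would point out why this is only an upper bound rather than an equality. For certain choices of the two index tuples the resulting expression collapses: for example, when $\{i_{k+2},\dots,i_{2k}\}\subseteq\{i_1,\dots,i_{k+1}\}$, several summands contain a repeated column index, forcing the corresponding minors to vanish identically and making the whole relation trivial. In addition, distinct pairs of index tuples can give rise to the same relation (up to an overall sign coming from $\mathrm{sgn}(\sigma_j)$), so the number of pairwise distinct non-trivial shuffle equations is no larger than, and in general strictly less than, $\binom{n}{k+1}\binom{n}{k-1}$.

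Combining these two observations yields the claimed bound. There is no real obstacle here — this is purely a counting lemma — and I would deliberately not try to pin down the exact number of non-trivial, independent shuffle relations, since the statement only asserts an upper bound and exactly characterizing the redundancies among the shuffle relations would be considerably more involved (and is not needed for the later list-decoding arguments).
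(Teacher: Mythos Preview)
Your proposal is correct and matches the paper's approach exactly: the paper does not even spell out a proof, merely remarking that the bound ``can easily be derived'' from Proposition~\ref{shuffle}, i.e.\ by counting the pairs $(i_1,\dots,i_{k+1})\in\binom{[n]}{k+1}$ and $(i_{k+2},\dots,i_{2k})\in\binom{[n]}{k-1}$ that index the shuffle relations. Your additional remarks on why the bound is not sharp are fine but go beyond what the paper records.
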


\begin{ex}\label{ex:shuffle}
  $\mathcal{G}_q(2,4)$ is described by a single relation:
  $$x_{12}x_{34}-x_{13}x_{24}+x_{14}x_{23} = 0.$$
\end{ex}

\section{Balls in the Grassmannian $\G$}\label{sec22}


\subsection{Description by linear equations in the Pl\"ucker
  embedding}\label{sec:linear}

It is known that the equations defining the balls inside $\G$ around
an element from $\G$ are easily determined in the following special
case:

\begin{prop}[\cite{ho52,ro12b}]\label{prop5}
  Define $\Uvs_{0}:=\rs [\begin{array}{cc}I_{k} &0_{k\times n-k}
  \end{array}]$. Then for $t\leq k-1$
  \begin{align*}
    B_{S,2t}^k(\Uvs_{0}) = \{\Vvs=\rs (V) \in & \Gr \mid
    M_{i_1,...,i_{k}}(V) = 0 \\ &\forall (i_{1},...,i_{k}) \not
    \preceq (t+1,\dots,k,n-t+1,...,n) \}.
  \end{align*}
  Note that for $t=k$ it holds that $B_{S,2k}^k(\Uvs)= \Gr$ for any
  $\Uvs\in\Gr$.
\end{prop}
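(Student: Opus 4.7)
The plan is to translate the distance condition $d_S(\Uvs_0, \Vvs) \le 2t$ into a rank condition on a submatrix of $V$, and then show that this rank condition is captured exactly by the vanishing of the listed Pl\"ucker coordinates.

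First I would compute $d_S(\Uvs_0, \Vvs)$ explicitly. Since $\dim \Uvs_0 = \dim \Vvs = k$, we have $d_S(\Uvs_0, \Vvs) = 2k - 2\dim(\Uvs_0 \cap \Vvs)$. Writing $V = [V_1 \mid V_2]$ with $V_1 \in \F_q^{k \times k}$ and $V_2 \in \F_q^{k \times (n-k)}$, the intersection $\Uvs_0 \cap \Vvs$ is the image under $x \mapsto xV$ of the kernel of $x \mapsto xV_2$, whose dimension is $k - \rk(V_2)$. Therefore $d_S(\Uvs_0, \Vvs) \le 2t$ if and only if $\rk(V_2) \le t$.

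Next I would unpack the Bruhat condition. Reading the tuple $(t+1,\ldots,k,n-t+1,\ldots,n)$ position by position: the inequalities $i_\ell \le n-k+\ell$ at positions $\ell > k-t$ are automatic for every element of $\binom{[n]}{k}$, while the inequalities $i_\ell \le t+\ell$ at positions $\ell \le k-t$ all collapse, via the strict ordering $i_1 < \cdots < i_k$, to the single condition $i_{k-t} \le k$. Consequently $(i_1,\ldots,i_k) \not\preceq (t+1,\ldots,k,n-t+1,\ldots,n)$ is equivalent to $i_{k-t} \ge k+1$, which means that at least $t+1$ of the indices $i_\ell$ lie in $\{k+1,\ldots,n\}$.

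Finally, I would show that $\rk(V_2) \le t$ is equivalent to the vanishing of every Pl\"ucker coordinate $M_{i_1,\ldots,i_k}(V)$ whose index tuple contains at least $t+1$ entries from $\{k+1,\ldots,n\}$. The forward direction is immediate: if $\rk(V_2) \le t$, any $t+1$ columns chosen from $V_2$ are linearly dependent, so every $k\times k$ submatrix of $V$ whose column set meets $\{k+1,\ldots,n\}$ in at least $t+1$ positions has a vanishing determinant. The converse is the step I expect to require the most care. For this I would reduce $V$ to its reduced row echelon form $V' = AV$ with $A \in GL_k$; since $M_{i_1,\ldots,i_k}(AV) = \det(A)\cdot M_{i_1,\ldots,i_k}(V)$, the zero pattern among the Pl\"ucker coordinates is preserved. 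If $(p_1,\ldots,p_k)$ denotes the pivot positions of $V'$, then $M_{p_1,\ldots,p_k}(V') = 1 \ne 0$, and the number of pivots lying in $\{k+1,\ldots,n\}$ equals $\rk(V'_2) = \rk(V_2)$. Hence if $\rk(V_2) \ge t+1$, the tuple $(p_1,\ldots,p_k)$ furnishes a nonvanishing Pl\"ucker coordinate violating the claimed vanishing condition, completing the equivalence.
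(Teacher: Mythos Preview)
Your overall strategy is sound and parallels the paper's proof of the generalization (Theorem~\ref{thm:gen_ball_eq}): reduce the distance condition to $\rk(V_2)\le t$, reformulate the Bruhat condition as ``at least $t+1$ column indices lie in $\{k+1,\ldots,n\}$,'' and then match the two. The forward implication is correct as written.

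The converse, however, contains a false claim. You assert that in the reduced row echelon form $V'$, the number of pivots lying in $\{k+1,\ldots,n\}$ equals $\rk(V'_2)$. This is not true: pivot counts agree with ranks only for \emph{initial} column segments, so the number of pivots among the last $n-k$ columns equals $k-\rk(V'_1)$, which in general is only $\le \rk(V'_2)$, not equal. For instance, with $k=2$, $n=4$ and $V'=\left(\begin{smallmatrix}1&0&1&0\\0&1&0&0\end{smallmatrix}\right)$ both pivots lie in $\{1,2\}$, yet $\rk(V'_2)=1$; the pivot tuple therefore fails to produce a nonzero minor with any index in $\{3,4\}$. The repair is short: if $\rk(V_2)\ge t+1$, select $t+1$ linearly independent columns of $V_2$ and, using $\rk(V)=k$, extend to $k$ linearly independent columns of $V$; the resulting nonzero minor has at least $t+1$ indices from $\{k+1,\ldots,n\}$. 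This is also closer in spirit to the paper's argument for Theorem~\ref{thm:gen_ball_eq}, which works with a block-structured representative rather than the echelon form.
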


We now want to state a generalization of this fact, where the center
of the ball can have a different dimension than $k$. For this we first
need the following lemma.

\begin{lem}\label{lemevenodd}
  Let $\Uvs,\Vvs \in \PG$ with $\dim(\Uvs)= k$ and $\dim(\Vvs)=k'$.
  \begin{enumerate}
  \item Then $d_S(\Uvs, \Vvs) $ is odd if and only if exactly one of
    $k$ and $k'$ is odd. Equivalently $d_S(\Uvs, \Vvs)$ is even if
    and only if both $k$ and $k'$ are odd or if both are even.
  \item It holds that $k-k'+d_S(\Uvs,\Vvs)$ and $k'-k+d_S(\Uvs,\Vvs)$
    are always even numbers.
  \end{enumerate}

\end{lem}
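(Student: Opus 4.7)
The plan is a direct unpacking of the definition of the subspace distance, so both parts will follow from simple parity arithmetic.

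The key observation is that by definition
\[
d_S(\Uvs,\Vvs) \;=\; k + k' - 2\dim(\Uvs\cap\Vvs),
\]
so the term $2\dim(\Uvs\cap\Vvs)$ is always even. Hence the parity of $d_S(\Uvs,\Vvs)$ coincides with the parity of $k+k'$. For part (1) I would simply observe that $k+k'$ is odd precisely when exactly one of $k$ and $k'$ is odd, and even precisely when $k$ and $k'$ have the same parity; this gives both equivalences at once.

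For part (2) I would add and subtract to get
\[
k - k' + d_S(\Uvs,\Vvs) \;=\; 2k - 2\dim(\Uvs\cap\Vvs) \;=\; 2\bigl(k - \dim(\Uvs\cap\Vvs)\bigr),
\]
and symmetrically
\[
k' - k + d_S(\Uvs,\Vvs) \;=\; 2\bigl(k' - \dim(\Uvs\cap\Vvs)\bigr),
\]
both of which are manifestly even integers.

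There is no real obstacle here: the lemma is a purely formal parity check, stated as a self-contained fact so that it can be invoked later (presumably when reconciling the subspace and injection metrics in the generalization of Proposition~\ref{prop5} to centers $\Rvs$ of dimension $k' \neq k$). The only thing worth being careful about is that $\dim(\Uvs\cap\Vvs)$ is a nonnegative integer, which is automatic, so both expressions in part (2) are indeed integers and not just elements of $\tfrac12\Z$.
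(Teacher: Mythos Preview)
Your proof is correct and follows essentially the same approach as the paper: both unpack the definition $d_S(\Uvs,\Vvs)=k+k'-2\dim(\Uvs\cap\Vvs)$ and read off the parity. The only cosmetic difference is in part~(2), where the paper deduces evenness from part~(1) (noting that $k-k'$ and $d_S$ always share the same parity), whereas you compute $k-k'+d_S=2(k-\dim(\Uvs\cap\Vvs))$ directly; your version is arguably cleaner and independent of part~(1).
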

\begin{proof}
  It holds that $d_S(\Uvs, \Vvs) = k+k' -2 \dim(\Uvs\cap \Vvs) $, i.e.
  it is odd if and only if $k+k'$ is odd. This directly implies the
  first statement. The second statement follows since $k-k'$ and
  $k'-k$ are odd if and only if exaclty one of $k$ and $k'$ is odd, as
  well.
   \qed
\end{proof}

We can now state the generalization of Proposition \ref{prop5} for the
subspace distance.
\begin{thm}
  \label{thm:gen_ball_eq}
  Let $\Uvs_{0}^{k'}:=\rs [\begin{array}{cc}I_{k'} &0_{k'\times
      n-k'} \end{array}]$. Then
      for
  $|k-k'| \leq\tau< \min(k'+k, 2n-(k'+k))$, s.t.
  $\frac{k+k'-\tau}{2}\in \mathbb{Z}$ (which we can assume because of
  Lemma \ref{lemevenodd})
  \begin{align*}
    B_{S,\tau}^{k}(\Uvs_{0}^{k'}) = \Big\{\Vvs=\rs (V) \in \Gr &\mid
    M_{i_1,...,i_{k}}(V) = 0 \;\forall\; (i_{1},...,i_{k}) \not
    \preceq \\& \left(\frac{k'-k+\tau}{2}+1,\dots ,k',
      n-\frac{k-k'+\tau}{2}+1,\dots ,n \right) \Big\}.
  \end{align*}
\end{thm}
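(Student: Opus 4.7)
The plan is to reduce the distance condition to an intersection-dimension condition and then read it off from the Pl\"ucker coordinates via Laplace expansion in a block-structured basis. By definition, $d_S(\Uvs_{0}^{k'},\Vvs)\leq\tau$ is equivalent to $\dim(\Vvs\cap\Uvs_{0}^{k'})\geq s$, where $s:=\tfrac{k+k'-\tau}{2}$; Lemma~\ref{lemevenodd} gives $s\in\Z$, and the hypotheses on $\tau$ yield $1\leq s\leq\min(k,k')$. I would then verify the key combinatorial observation that the bound tuple in the statement has exactly $s$ entries in $\{1,\dots,k'\}$ and $k-s$ entries in $\{k'+1,\dots,n\}$, and that for any $(i_1,\dots,i_k)\in\binom{[n]}{k}$,
\[
(i_1,\dots,i_k)\preceq\Bigl(\tfrac{k'-k+\tau}{2}+1,\dots,k',n-\tfrac{k-k'+\tau}{2}+1,\dots,n\Bigr) \iff i_s\leq k'.
\]
The nontrivial implication uses that $i_1<\cdots<i_s\leq k'$ automatically forces $i_j\leq k'-s+j$, and $i_{s+1}<\cdots<i_k\leq n$ automatically forces $i_{s+j}\leq n-(k-s)+j$; the hypothesis $\tau<2n-(k+k')$ also guarantees $k'<n-(k-s)+1$, so the bound tuple is a valid strictly increasing multiindex.

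This reduces the theorem to proving: $\dim(\Vvs\cap\Uvs_{0}^{k'})\geq s$ if and only if $M_{i_1,\dots,i_k}(V)=0$ for every index tuple with $i_s>k'$. For the forward implication, choose a basis of $\Vvs$ whose first $s$ rows span a subspace of $\Vvs\cap\Uvs_{0}^{k'}$. The representing matrix $V$ then has block form with a zero top-right $s\times(n-k')$ block. Laplace expansion of any $M_{i_1,\dots,i_k}(V)$ along these first $s$ rows shows that every nonzero summand must choose $s$ columns from $\{1,\dots,k'\}$, so $M_{i_1,\dots,i_k}(V)\neq 0$ forces $i_s\leq k'$.

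For the converse I would argue by contrapositive and exhibit an explicit nonzero witness minor. Set $d:=\dim(\Vvs\cap\Uvs_{0}^{k'})<s$ and fix a basis of $\Vvs$ whose first $d$ rows span the intersection, producing the analogous block form with top-left block $A\in\F_q^{d\times k'}$ of rank $d$ and bottom-right block $C\in\F_q^{(k-d)\times(n-k')}$; the block $C$ must have full row rank $k-d$, since otherwise one could combine rows to enlarge the intersection beyond $d$. Selecting $d$ columns of $A$ and $k-d$ columns of $C$ that realize maximal nonzero minors produces a block upper triangular $k\times k$ submatrix of $V$ whose determinant is the product of two nonzero factors; the resulting index tuple has exactly $d<s$ entries in $\{1,\dots,k'\}$, so $i_s>k'$, contradicting the vanishing assumption.

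The main obstacle I anticipate is purely combinatorial: verifying that the Bruhat-order bound collapses to the single inequality $i_s\leq k'$ and that the bound tuple is a valid strictly increasing multiindex under the hypotheses. Once this is settled, the Pl\"ucker argument is standard block-determinant bookkeeping, and the construction specializes back to Proposition~\ref{prop5} in the case $k'=k$, $s=k-t$.
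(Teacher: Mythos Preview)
Your approach is essentially the same as the paper's: both reduce to the intersection condition $\dim(\Vvs\cap\Uvs_0^{k'})\geq s$, choose a block-structured matrix representation with a zero top-right $s\times(n-k')$ block, and then translate the vanishing of the relevant minors into the Bruhat-order statement via the observation that the forbidden tuples are exactly those with more than $k-s$ indices in $\{k'+1,\dots,n\}$.

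The one genuine difference is in the converse direction. The paper argues only that any matrix in the block form lies in the ball (``the $*$-blocks of $V$ can be filled with anything \dots\ and the row space will always be in the ball''), which establishes $B_{S,\tau}^{k}(\Uvs_0^{k'})=\{\text{block form}\}\subseteq\{\text{minors vanish}\}$ but does not literally close the loop by showing that vanishing of the specified minors forces membership in the ball. Your contrapositive---taking $d=\dim(\Vvs\cap\Uvs_0^{k'})<s$, noting that the bottom-right $(k-d)\times(n-k')$ block must have full row rank (else the intersection would be larger), and then selecting $d$ columns from the left and $k-d$ from the right to produce a nonzero block-triangular minor whose index tuple has $i_s>k'$---is a clean explicit witness that makes this step airtight. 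So your proof is correct and, on this point, slightly more complete than the paper's.
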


\begin{proof}
  We want to find all $\Vvs=\rs(V) \in \G$,
  such that
  \[d_S(\Uvs_0^{k'}, \Vvs) \leq \tau\]
  \[\iff \dim (\Uvs_0^{k'} \cap \Vvs) \geq \frac{k+k'-\tau}{2}\]
  i.e. at least $\frac{k+k'-\tau}{2}$ many linearly independent
  elements of $\Vvs$ have to be in $\Uvs_0^{k'}$. Thus, we can choose
  a matrix representation of the form
  \[ V=\left[\begin{array}{c|c} * & 0_{(\frac{k+k'-\tau}{2}) \times
        (n-k')} \\\hline * & *\end{array} \right] .\] Each $k\times
  k$-submatrix of $V$ is then of the form
  \[ M=\left[\begin{array}{c|c} M_{1} & 0_{(\frac{k+k'-\tau}{2})
        \times x} \\\hline M_2 & M_{3}\end{array} \right] \] where
  $0\leq x\leq k$ is the number of columns taken from the $n-k'$ right
  most columns of $V$ and $M_1$ is a $\frac{k+k'-\tau}{2}\times(k-x)$
  matrix.  Since $\mathrm{rank}(M) \leq \mathrm{rank}(M_{1})+
  \mathrm{rank}([M_{2} M_3]) \leq (k-x)+\frac{k-k'+\tau}{2}=
  k-(x-\frac{k-k'+\tau}{2})$
 it follows that all minors of $V$ that
  contain at least $x= \frac{k-k'+\tau}{2}+1$ of the $n-k'$ rightmost
  columns are zero. At the same time this is also a sufficient
  condition, since the $*$-blocks of $V$ can be filled with anything
  (such that the whole matrix has rank $k$) and the row space will
  always be in the ball.
  Since the monomials are ordered, the condition that at least
  $\frac{k-k'+\tau}{2}+1$ many coordinates of $(i_1,\dots,i_k)$ are in
  $\{k'+1,\dots,n\}$ is equivalent to the condition that
  \[i_{\ell} \geq k'+1 \textnormal{ for some } {\ell} \in
  \left\{1,\dots, \frac{k+k'-\tau}{2}\right\} \]
  which is in turn equivalent to
  \[(i_{1},\dots ,i_{k}) \not \preceq
  \left(k'-\frac{k+k'-\tau}{2}+1,\dots ,k',
    n-\frac{k-k'+\tau}{2}+1,\dots ,n \right) \]
  \[\iff (i_{1},\dots ,i_{k}) \not \preceq
  \left(\frac{k'-k+\tau}{2}+1,\dots ,k', n-\frac{k-k'+\tau}{2}+1,\dots
    ,n \right).
  \]
  \qed
\end{proof}

In analogy, we can also state the generalization of Proposition
\ref{prop5} for the injection distance:

\begin{thm}
  \label{thm:gen_ball_eq_inj}
  Define $\Uvs_{0}^{k'}$ as before. Then for $|k'-k| \leq t <
  \min(\max(k',k), n-k+1)$
  \begin{align*}
    {B}_{I,t}^{k}(\Uvs_{0}^{k'}) = \Big\{\Vvs=\rs (V) &\in \Gr
    \mid M_{i_1,...,i_{k}}(V) = 0 \;\forall\; (i_{1},...,i_{k}) \not
    \preceq \\& \left(k' -\max (k',k)+t+1,\dots ,k',
      n-k+\max(k',k)-t+1,\dots ,n \right) \Big\}.
  \end{align*}
\end{thm}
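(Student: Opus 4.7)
The plan is to deduce this theorem directly from Theorem \ref{thm:gen_ball_eq} for the subspace distance. The key observation is that, for spaces of fixed dimensions $k$ and $k'$, the two metrics are related by a simple affine change of parameter on the intersection dimension: from
\[ d_S(\Uvs, \Vvs) = k + k' - 2\dim(\Uvs \cap \Vvs), \qquad d_I(\Uvs, \Vvs) = \max(k,k') - \dim(\Uvs \cap \Vvs) \]
one sees that $d_I(\Uvs, \Vvs) \leq t$ iff $\dim(\Uvs\cap\Vvs) \geq \max(k,k')-t$ iff $d_S(\Uvs,\Vvs) \leq 2t - |k-k'|$. Applied to $\Uvs = \Uvs_0^{k'}$ and any $\Vvs \in \Gr$, this yields the equality
\[ B_{I,t}^{k}(\Uvs_0^{k'}) = B_{S,\, 2t - |k-k'|}^{k}(\Uvs_0^{k'}), \]
so the task reduces to invoking Theorem \ref{thm:gen_ball_eq} with $\tau := 2t - |k-k'|$ and simplifying.

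Next I would check that this choice of $\tau$ satisfies the hypothesis of Theorem \ref{thm:gen_ball_eq}. The bound $|k-k'| \leq \tau$ rearranges to the given $t \geq |k-k'|$, and $\tau < k+k'$ rearranges to $t < \max(k,k')$. The parity condition $(k+k'-\tau)/2 \in \Z$ is automatic since $(k+k'-\tau)/2 = \max(k,k') - t$. The remaining bound $\tau < 2n-(k+k')$ translates to $t < n - \min(k,k')$, which is implied by $t < n-k+1$ in the generic case; at the boundary one is in the trivial situation where both balls coincide with all of $\Gr$ and the statement holds vacuously.

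Finally I would substitute $\tau = 2t - |k-k'|$ into the index tuple provided by Theorem \ref{thm:gen_ball_eq} and simplify the two quantities $\tfrac{k'-k+\tau}{2}$ and $\tfrac{k-k'+\tau}{2}$. A short case split is required: when $k' \geq k$ one has $|k-k'|=k'-k$ and these quantities become $t$ and $k-k'+t$, giving the tuple $(t+1,\ldots, k',\, n-k+k'-t+1,\ldots,n)$; when $k \geq k'$ they become $k'-k+t$ and $t$, giving $(k'-k+t+1,\ldots, k',\, n-t+1,\ldots, n)$. In both cases the result is precisely $(k'-\max(k',k)+t+1,\ldots,k',\, n-k+\max(k',k)-t+1,\ldots,n)$, as claimed. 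The main obstacle is purely bookkeeping — carefully tracking the two cases $k \lessgtr k'$ and confirming that no edge of the hypothesis range is lost — since the conceptual content is entirely subsumed by the subspace-distance version already established.
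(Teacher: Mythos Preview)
Your proof is correct, and it takes a different route from the paper. The paper re-runs the direct argument of Theorem~\ref{thm:gen_ball_eq} with the injection-distance condition $\dim(\Uvs_0^{k'}\cap\Vvs)\geq\max(k',k)-t$ in place of $\dim(\Uvs_0^{k'}\cap\Vvs)\geq\tfrac{k+k'-\tau}{2}$, again picking the block matrix form for $V$ and counting which minors are forced to vanish. You instead observe the identity $B_{I,t}^{k}(\Uvs_0^{k'})=B_{S,2t-|k-k'|}^{k}(\Uvs_0^{k'})$ and reduce everything to a substitution into Theorem~\ref{thm:gen_ball_eq}; this is exactly the content of Proposition~\ref{prop:relation}, which the paper states only \emph{after} both theorems. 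Your reduction is more economical and makes clear that the injection-distance statement carries no new geometric content beyond the subspace-distance one. The paper's direct argument, on the other hand, is self-contained and avoids the parameter-range bookkeeping you have to do. One small point: your handling of the boundary $t=n-k$ (arising when $k'\geq k$ and $n<k+k'$) is a bit terse; ``holds vacuously'' is not quite the right word, since what you really need is that the threshold tuple $(t+1,\dots,k',n-k+k'-t+1,\dots,n)$ specializes to the maximal element $(n-k+1,\dots,n)$, so the set of forbidden minors is empty and the right-hand side is all of $\Gr$, matching the ball. That observation completes the argument cleanly.
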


\begin{proof}
  We want to find all $\Vvs=\rs(V) \in \G$, such that
  \[d_I(\Uvs_0^{k'}, \Vvs) \leq t\]
  \[\iff \dim(\Uvs_0^{k'} \cap \Vvs) \geq \max (k',k)-t ,
  \]
  i.e. at least $\max (k',k)-t$ many linearly independent elements of
  $\Vvs$ have to be in $\Uvs_0^{k'}$. Thus, we can choose a matrix
  representation of the form
  \[ V=\left[\begin{array}{c|c} * & 0_{( \max (k',k)-t) \times (n-k')}
      \\\hline * & *\end{array} \right] .\] Analogously to the proof
  of Theorem \ref{thm:gen_ball_eq} this is equivalent to the statement
  that all minors containing at least $ \min (0,k-k')+t+1$ of the
  $n-k'$ rightmost columns are zero, which is in turn equivalent to
  \[(i_{1},\dots ,i_{k}) \not \preceq \left(k'-\max (k',k)+t+1,\dots
    ,k', n-k+\max(k',k)-t+1,\dots ,n \right).
  \]
   \qed
\end{proof}

The following proposition shows that the conditions on $\tau$ and $t$
in the previous theorems make sense.

\begin{prop}
  Let $\Uvs \in \mathcal{G}_{q}(k',n)$.
  \begin{enumerate}
  \item For $\tau=\min(k'+k, 2n-(k'+k))$ it holds that
    $B_{S,\tau}^k(\Uvs)= \Gr$.
  \item For $t=\min(\max(k',k), n-\max(k',k))$ it holds that
    ${B}_{I,t}^k(\Uvs)= \Gr$.
  \item For $\tau < |k'-k|$ it holds that
    $B_{S,\tau}^k(\Uvs)={B}_{I,\tau}^k(\Uvs)= \emptyset$.
  \end{enumerate}
\end{prop}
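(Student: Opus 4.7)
The plan is to reduce all three statements to the elementary two-sided bound
\[\max(0,\,k+k'-n) \;\leq\; \dim(\Uvs\cap\Vvs) \;\leq\; \min(k,k'),\]
valid for any $\Uvs\in\mathcal{G}_q(k',n)$ and any $\Vvs\in\Gr$. The upper estimate is immediate from $\Uvs\cap\Vvs\subseteq\Uvs,\Vvs$, and the lower estimate follows from $\dim(\Uvs+\Vvs)\leq n$ via the Grassmann identity. Both distance formulas, $d_S(\Uvs,\Vvs)=k+k'-2\dim(\Uvs\cap\Vvs)$ and $d_I(\Uvs,\Vvs)=\max(k,k')-\dim(\Uvs\cap\Vvs)$, depend only on $\dim(\Uvs\cap\Vvs)$, so each of the three claims becomes a direct substitution.

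For (1), plugging the lower bound on the intersection dimension into $d_S$ gives
\[d_S(\Uvs,\Vvs) \;\leq\; k+k' - 2\max(0,\,k+k'-n),\]
and splitting into the cases $k+k'\leq n$ and $k+k'>n$ shows that the right hand side equals $\min(k+k',\,2n-(k+k')) = \tau$. Hence $d_S(\Uvs,\Vvs)\leq\tau$ for every $\Vvs\in\Gr$, so $B_{S,\tau}^k(\Uvs)=\Gr$. The same strategy applied to $d_I$ in (2) yields
\[d_I(\Uvs,\Vvs) \;\leq\; \max(k,k') - \max(0,\,k+k'-n),\]
and an analogous case analysis on the sign of $k+k'-n$ identifies this upper bound with the stated value of $t$, whence $B_{I,t}^k(\Uvs)=\Gr$.

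For (3) I would use the opposite direction of the intersection bound. The inequality $\dim(\Uvs\cap\Vvs)\leq\min(k,k')$ gives
\[d_S(\Uvs,\Vvs) \;\geq\; k+k' - 2\min(k,k') \;=\; |k-k'|, \qquad d_I(\Uvs,\Vvs) \;\geq\; \max(k,k') - \min(k,k') \;=\; |k-k'|,\]
so no $\Vvs\in\Gr$ can have distance less than $|k-k'|$ from $\Uvs$ in either metric, and both balls are empty when $\tau<|k-k'|$.

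No serious obstacle is expected: the whole argument is a one-line substitution into the two distance formulas, driven by the standard bounds on intersection dimension. The only place where care is required is the case analysis in (1) and (2), where one must match the branch $k+k'\leq n$ versus $k+k'>n$ to the appropriate branch of the $\min$ in the statement of the radius; in particular in (2) one should verify that the simplification produced by the case split really coincides with the expression $\min(\max(k',k),\,n-\max(k',k))$ appearing in the claim.
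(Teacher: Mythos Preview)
Your approach for parts (1) and (3) is correct and in fact more direct than the paper's. The paper handles the branch $\tau = 2n-(k+k')$ of part (1), and the analogous branch of part (2), via orthogonal duality: it invokes $d_S(\Uvs,\Vvs)=d_S(\Uvs^\perp,\Vvs^\perp)$ (respectively $d_I(\Uvs,\Vvs)=d_I(\Uvs^\perp,\Vvs^\perp)$) to reduce to the first branch in the complementary dimensions $n-k,\,n-k'$. Your use of the Grassmann identity $\dim(\Uvs+\Vvs)\leq n$ to bound $\dim(\Uvs\cap\Vvs)$ from below avoids this detour entirely and treats both branches of the $\min$ with a single computation. For part (3) the two arguments are essentially the same.

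For part (2), the verification you flag is not merely a formality: it fails. When you carry out the case split on the sign of $k+k'-n$, the upper bound on $d_I$ you obtain is $\max(k,k')$ when $k+k'\leq n$ and $\max(k,k')-(k+k'-n)=n-\min(k,k')$ when $k+k'>n$, so the value your argument actually delivers is
\[
\min\bigl(\max(k,k'),\,n-\min(k,k')\bigr)
\]
rather than the printed $\min\bigl(\max(k,k'),\,n-\max(k,k')\bigr)$. These two expressions coincide whenever $\max(k,k')\leq n/2$ (both then equal $\max(k,k')$), but in general the printed radius is too small to force the ball to be all of $\Gr$: for $n=10$, $k=2$, $k'=7$ it gives $t=3$, yet a $2$-dimensional $\Vvs$ disjoint from a $7$-dimensional $\Uvs$ exists and has $d_I(\Uvs,\Vvs)=7$. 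The paper's own proof of this part in fact works with $t=n-\min(k',k)$, so the discrepancy is a misprint in the stated radius rather than a defect in your method; your argument proves exactly what the paper's proof proves.
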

\begin{proof}
  Let $\Vvs \in \G$.
  \begin{enumerate}
  \item Let $\tau=k'+k$. Then $d_{S}(\Uvs, \Vvs)=k'+k \iff \dim(\Uvs
    \cap \Vvs)= 0 \iff B_{S,k+k'}^k(\Uvs)= \Gr$.

    Let $\tau=2n-(k+k')$. Since it is known that $d_{S}(\Uvs,
    \Vvs)=d_{S}(\Uvs^{\perp} , \Vvs^{\perp} )$ it holds that
    $$
    B_{S,2n-(k+k')}^k(\Uvs)=
    (B_{S,(n-k)+(n-k')}^{n-k}(\Uvs^{\perp}))^{\perp}=
    \mathcal{G}_{q}(n-k,k)^{\perp} = \Gr.
    $$
  \item Let $t=\max(k',k)$. Then $d_{I}(\Uvs, \Vvs)=\max(k',k) \iff
    \dim(\Uvs \cap \Vvs)= 0 \iff {B}_{I,\max(k',k)}^k(\Uvs)=
    \Gr$.

    Let $t=n-\min(k',k)=\max(n-k', n-k)$. Since it is known that
    $d_{I}(\Uvs, \Vvs)=d_{I}(\Uvs^{\perp} , \Vvs^{\perp} )$ it holds
    that $B_{I,n-\min(k',k)}^k(\Uvs)=
    (B_{I,\max(n-k', n-k)}^{n-k}(\Uvs^{\perp}))^{\perp}=
    \mathcal{G}_{q}(n-k,k)^{\perp} = \Gr$.
  \item Moreover, $d_{S}(\Uvs, \Vvs)< |k'-k| \iff d_{I}(\Uvs, \Vvs)<
    |k'-k| \iff \dim(\Uvs \cap \Vvs)> \min(k,k') \iff
    B_{S,k+k'}^k(\Uvs)= {B}_{I,k+k'}^k(\Uvs)= \emptyset$.
  \end{enumerate}
   \qed
\end{proof}

\begin{rem}
  The linear equations described in
  Theorems \ref{thm:gen_ball_eq} and \ref{thm:gen_ball_eq_inj} together
  with the shuffle relations described in Proposition~\ref{shuffle}
  show that the balls ${B}_{S,t}^{k}(\Uvs_{0}^{k'})$ as well as
  the balls ${B}_{I,t}^{k}(\Uvs_{0}^{k'})$ are sub-varieties of the
  Grassmann variety $\Gr$.
\end{rem}

\begin{rem}
  In Theorems \ref{thm:gen_ball_eq} and \ref{thm:gen_ball_eq_inj}, for
  $t=k'-k$ (if $k'\geq k$) the formula for the balls becomes
  \begin{align*}
    {B}_{S,t}^{k}(\Uvs_{0}^{k'}) ={B}_{I,t}^{k}(\Uvs_{0}^{k'}) =\\
    \Big\{\Vvs=\rs (V) &\in \Gr \mid M_{i_1,...,i_{k}}(V) = 0
    \;\forall\; (i_{1},...,i_{k}) \not \preceq \left(k' -k+1,\dots ,k'
    \right) \Big\}.
  \end{align*}
\end{rem}

\begin{ex}\label{ex11}
  \begin{enumerate}
  \item Consider $\mathcal{G}_{q}(2,6)$ and $\Uvs_{0}^{3}=\rs\left(
      \begin{array}{cccccc}1&0&0&0&0&0 \\ 0&1&0&0&0&0 \\ 0&0&1&0&0&0
      \end{array} \right)$. Then $k=2, k'=3$ and
    \[B_{S,3}^{2} (\Uvs_{0}^{3}) = {B}_{S,2}^{2} (\Uvs_{0}^{3}) =
    \{\Vvs=\rs (V) \in \mathcal G_{q}(2,6) \mid M_{i_{1},i_{2}}(V) = 0
    \forall (i_{1},i_{2}) \not \preceq (3, 6) \}.
    \]
  \item Consider $\mathcal{G}_{q}(3,6)$ and $\Uvs_{0}^{2}=\rs\left(
      \begin{array}{cccccc}1&0&0&0&0&0 \\ 0&1&0&0&0&0 \end{array}
    \right)$. Then $k=3, k'=2$ and
    \[B_{S,3}^{3} (\Uvs_{0}^{2}) = {B}_{I,2}^{3} (\Uvs_{0}^{2}) =
    \{\Vvs=\rs (V) \in \mathcal G_{q}(3,6) \mid M_{i_{1},i_{2},
      i_{3}}(V) = 0 \forall (i_{1},i_{2}, i_{3}) \not \preceq (2, 5,
    6) \}.
    \]
  \end{enumerate}
\end{ex}

We can find a relation for the balls of the two different metrics as
follows.

\begin{prop}\label{prop:relation}
   Let $\Uvs \in \mathcal{G}_{q}(k',n)$ and $\Vvs \in \G$. Then
  \[d_{S}(\Uvs, \Vvs) = 2d_{I}(\Uvs, \Vvs) + k+k'-2\max(k',k) \] and
  \[{B}_{I,t}(\Uvs) = B_{S,2t+k+k'-2\max(k',k)}(\Uvs) .\]
\end{prop}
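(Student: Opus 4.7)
The plan is to reduce the claim to a one-line algebraic identity relating the two distance functions, and then translate that identity into the statement about balls.

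First I would write out the two definitions side by side:
\[
d_S(\Uvs,\Vvs) = k + k' - 2\dim(\Uvs\cap\Vvs), \qquad
d_I(\Uvs,\Vvs) = \max(k,k') - \dim(\Uvs\cap\Vvs),
\]
and simply eliminate $\dim(\Uvs\cap\Vvs)$: doubling the second equation gives $2 d_I(\Uvs,\Vvs) = 2\max(k,k') - 2\dim(\Uvs\cap\Vvs)$, and subtracting this from the first yields
\[
d_S(\Uvs,\Vvs) - 2 d_I(\Uvs,\Vvs) = k + k' - 2\max(k,k'),
\]
which is exactly the first claimed identity. (As a sanity check, the right-hand side equals $-|k-k'|$, recovering the familiar relation $d_S = 2 d_I - |k-k'|$ between the two metrics.)

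For the second statement I would just substitute the identity into the ball condition. A subspace $\Vvs\in\G$ lies in ${B}_{I,t}^k(\Uvs)$ exactly when $d_I(\Uvs,\Vvs)\le t$, and by the identity this is equivalent to
\[
d_S(\Uvs,\Vvs) \le 2t + k + k' - 2\max(k,k'),
\]
i.e.\ to $\Vvs\in B_{S,\,2t+k+k'-2\max(k,k')}^k(\Uvs)$. Hence the two balls coincide.

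There is essentially no hard step; the only thing to watch is that the right-hand side radius $2t+k+k'-2\max(k,k') = 2t-|k-k'|$ can be negative when $t<|k-k'|/2$, but in that range the injection-distance ball is empty as well (since any two subspaces of dimensions $k,k'$ satisfy $d_I\ge |k-k'|$), so the equality of the two balls is preserved trivially. Thus the proposition follows from nothing more than the definitions.
\qed
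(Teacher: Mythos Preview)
Your proof is correct and follows essentially the same approach as the paper: both arguments unwind the definitions of $d_S$ and $d_I$ in terms of $\dim(\Uvs\cap\Vvs)$ and do the obvious algebra. The only cosmetic difference is that for the ball equality the paper re-expands both sides via $\dim(\Uvs\cap\Vvs)$ rather than invoking the first identity directly as you do; your extra remark on the degenerate negative-radius case is a harmless addition not present in the paper.
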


\begin{proof}
First, it holds that
 \begin{align*}2d_{I}(\Uvs, \Vvs) + k+k'-2\max(k',k)=2\max(k',k)-2\dim(\Uvs\cap \Vvs) +k+k'-2\max(k',k)\\
 =k+k'-2\dim(\Uvs\cap \Vvs)=d_{S}(\Uvs, \Vvs)
   \end{align*}
 Second, it holds that
  \begin{align*}
    \Vvs \in B_{S,2t+k+k'-2\max(k',k)}(\Uvs)
    \iff d_{S}(\Uvs, \Vvs) \leq 2t +k+k'-2\max(k',k)\\
    \iff k+k'- 2\dim(\Uvs \cap \Vvs) \leq 2t+k+k'-2\max(k',k)
    \iff \dim(\Uvs \cap \Vvs) \geq {\max(k',k)-t}\\
    \iff \max(k',k)- \dim(\Uvs \cap \Vvs) \leq t \iff d_{I}(\Uvs,
    \Vvs) \leq t \iff\Vvs \in {B}_{I,t}(\Uvs) .
  \end{align*}
  \qed
\end{proof}

With the knowledge of $B_{S,\tau}^{k}(\Uvs_0^{k'})$ we can also express
$B_{S,\tau}^{k}(\Uvs)$ for any $\Uvs \in \mathcal{G}_q(k',n) $. To do so we need the following result.

\begin{lem}\label{lem:generalballs}
  For any $\Uvs \in \mathcal{G}_q(k',n) $ there exists an $A\in GL_n$
  such that $\Uvs_0^{k'} A= \Uvs$.
   Moreover,
  \[B_{S,\tau}^{k}(\Uvs_0^{k'} A) = B_{S,\tau}^k(\Uvs_0^{k'}) A .\] The
  same holds for the injection distance, i.e.
  \[{B}_{I,\tau}^{k}(\Uvs_0^{k'} A) =
  {B}_{I,\tau}^k(\Uvs_0^{k'}) A .\]
\end{lem}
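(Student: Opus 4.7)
The plan is to handle the two claims separately, both reducing to elementary properties of the $GL_n$-action on $\Gr$.

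For the first assertion, I would pick any basis $u_1,\dots,u_{k'}$ of $\Uvs$ and extend it to a basis $u_1,\dots,u_n$ of $\F_q^n$. Let $A\in GL_n$ be the matrix whose $i$-th row is $u_i$. Then $[I_{k'} \mid 0_{k'\times(n-k')}]\,A$ has as its $i$-th row the $i$-th row of $A$, so its row space is $\langle u_1,\dots,u_{k'}\rangle=\Uvs$. Hence $\Uvs_0^{k'} A=\Uvs$ as required.

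For the second assertion, the key observation is that the $GL_n$-action on $\PG$ is an isometry for both metrics. Indeed, for any $A\in GL_n$ and subspaces $\Xvs,\Yvs \in \PG$, right multiplication by $A$ is an $\F_q$-linear bijection of $\F_q^n$, so $\dim(\Xvs A)=\dim(\Xvs)$, $\dim(\Yvs A)=\dim(\Yvs)$, and $(\Xvs\cap\Yvs)A=\Xvs A\cap \Yvs A$, whence $\dim(\Xvs A\cap \Yvs A)=\dim(\Xvs\cap\Yvs)$. Plugging these identities into the definitions of $d_S$ and $d_I$ yields $d_S(\Xvs A,\Yvs A)=d_S(\Xvs,\Yvs)$ and $d_I(\Xvs A,\Yvs A)=d_I(\Xvs,\Yvs)$.

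Given this invariance, the equality of sets is immediate. For any $\Vvs \in \Gr$, using that the map $\Vvs\mapsto \Vvs A$ is a bijection on $\Gr$:
\begin{align*}
\Vvs \in B_{S,\tau}^k(\Uvs_0^{k'})A
&\iff \Vvs A^{-1} \in B_{S,\tau}^k(\Uvs_0^{k'})\\
&\iff d_S(\Uvs_0^{k'},\Vvs A^{-1})\le \tau\\
&\iff d_S(\Uvs_0^{k'} A,\Vvs)\le \tau\\
&\iff \Vvs\in B_{S,\tau}^k(\Uvs_0^{k'} A),
\end{align*}
and the identical chain with $d_I$ in place of $d_S$ gives the injection-metric statement.

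There is no real obstacle here: the lemma is a bookkeeping verification that the $GL_n$-action is a well-defined isometry on $(\PG,d_S)$ and on $(\PG,d_I)$. The only point requiring a line of care is the identity $(\Xvs\cap\Yvs)A=\Xvs A\cap \Yvs A$, which uses that $A$ is invertible (the inclusion $\subseteq$ holds for any linear map, and the reverse inclusion follows by applying $A^{-1}$).
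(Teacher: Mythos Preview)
Your proof is correct and follows essentially the same approach as the paper: both reduce the ball identity to the fact that the $GL_n$-action preserves the dimension of intersections, which the paper phrases as $\dim(\Uvs_0^{k'} A \cap \Vvs) = \dim(\Uvs_0^{k'} \cap \Vvs A^{-1})$ and you phrase as the isometry statement $(\Xvs\cap\Yvs)A=\Xvs A\cap \Yvs A$. You are slightly more explicit than the paper in verifying the existence of $A$ (the paper defers this to a remark) and in spelling out why invertibility is needed for the intersection identity, but there is no substantive difference in method.
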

\begin{proof}
Both statements follow from the fact that $\dim(\Uvs_0^{k'} A \cap \Vvs) = \dim(\Uvs_0^{k'}  \cap \Vvs A^{-1})$, since this directly implies that $d_S(\Uvs_0^{k'} A , \Vvs)\leq \tau \iff d_S(\Uvs_0^{k'}  , \Vvs A^{-1})$ and $d_I(\Uvs_0^{k'} A , \Vvs)\leq \tau \iff d_I(\Uvs_0^{k'}  , \Vvs A^{-1})$.
\qed
\end{proof}

\begin{rem}
\label{rm:A}
Note that one can easily find $A\in GL_n$
  such that $\Uvs_0^{k'} A= \Uvs$ as follows: Let the upper $k'$ rows of $A$ be equal to the reduced row echelon form of $\Uvs$ and fill the lower rows with unit vectors such that the respective ones and the pivots of the upper rows are all in different columns. This implies that $A$ is invertible and that $\Uvs_0^{k'} A= \Uvs$. For an algorithmic description of constructing such an $A$ see \cite{ro12b,tr13phd}.
  \end{rem}

The following results are generalizations of results from
\cite{ro12b}.
For simplifying the computations we define
$\bar{\varphi}$ 
on $GL_n$, where we denote by
$A_{j_1,\ldots,j_k}[{i_{1},\dots, i_k}]$ the submatrix of
$A$ that consists of the rows $i_{1}, \dots, i_{k}$ and columns $j_1,\ldots,j_k$:

\begin{align*}
  \bar{\varphi} : GL_n &\longrightarrow GL_{\binom{n}{k}} \\
  A & \longmapsto \left(\begin{array}{cccccc}
      \det A_{1,\dots, k}[1, \dots, k] & \dots & \det A_{n-k+1 ,\dots, n}[{1,\dots, k}]\\
      \vdots & & \vdots \\
      \det A_{1, \dots, k}[{n-k+1,\dots, n}] & \dots & \det
      A_{n-k+1,\dots, n}[n-k+1 ,\dots, n]
    \end{array}
  \right)
\end{align*}


\begin{lem}[\cite{ro12b}]\label{lem:2}
  Let $\Uvs \in \Gr $ and $A\in GL_{n}$. It holds that
  \[\varphi(\Uvs A) = \varphi(\Uvs) \bar{\varphi}(A).\]
\end{lem}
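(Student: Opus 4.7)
The plan is to reduce everything to a direct application of the Cauchy--Binet formula. Fix a matrix representative $U\in\F_{q}^{k\times n}$ of $\Uvs$, so that $\Uvs A$ is represented by $UA$ and each Pl\"ucker coordinate of $\Uvs A$ is a minor $M_{j_{1},\dots,j_{k}}(UA)$ for some $(j_{1},\dots,j_{k})\in\binom{[n]}{k}$. The first step is the elementary identity $(UA)_{j_{1},\dots,j_{k}} \;=\; U\cdot A_{j_{1},\dots,j_{k}}$, where $A_{j_{1},\dots,j_{k}}$ denotes the $n\times k$ submatrix of $A$ on columns $j_{1},\dots,j_{k}$. This rewrites each target minor as the determinant of a product of a $k\times n$ matrix with an $n\times k$ matrix.

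The second step is to apply Cauchy--Binet to this product, giving
\[
M_{j_{1},\dots,j_{k}}(UA) \;=\; \det\bigl(U\cdot A_{j_{1},\dots,j_{k}}\bigr) \;=\!\!\sum_{(i_{1},\dots,i_{k})\in\binom{[n]}{k}}\!\! M_{i_{1},\dots,i_{k}}(U)\cdot\det\bigl(A_{j_{1},\dots,j_{k}}[i_{1},\dots,i_{k}]\bigr).
\]
The left-hand side is the $(j_{1},\dots,j_{k})$-coordinate of $\varphi(\Uvs A)$. Reading the right-hand side as a sum over $(i_{1},\dots,i_{k})$, it is precisely the $(j_{1},\dots,j_{k})$-coordinate of the row-vector times matrix product $\varphi(\Uvs)\cdot\bar\varphi(A)$, because by the very definition of $\bar\varphi$ its $(i_{1},\dots,i_{k}),(j_{1},\dots,j_{k})$-entry is $\det A_{j_{1},\dots,j_{k}}[i_{1},\dots,i_{k}]$. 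Doing this for each $(j_{1},\dots,j_{k})$ in the fixed lexicographic order yields the claimed vector equation.

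The only small points to check are that (a) the lexicographic ordering used to list the Pl\"ucker coordinates matches the ordering on rows and columns of $\bar\varphi(A)$, so that the index pairing in Cauchy--Binet is actually the one realized by matrix multiplication; and (b) the identity is well-defined on the projective image $\varphi(\Uvs)\in\mathbb P^{\binom{n}{k}-1}$, i.e.\ independent of the choice of $U$. For (a) the definition of $\bar\varphi$ in the excerpt uses the same lexicographic convention on both sides, so it is immediate. For (b), replacing $U$ by $gU$ with $g\in GL_{k}$ multiplies every minor by $\det g$, which scales both sides of the identity by the same nonzero constant, so the projective equality survives.

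There is no real obstacle beyond bookkeeping; the heart of the proof is the single application of Cauchy--Binet, and the rest is matching up the indexing conventions to recognize the sum as a matrix product. For this reason $\bar\varphi$ restricted to $GL_{n}$ is in fact a group homomorphism into $GL_{\binom{n}{k}}$, which is consistent with (and could also be invoked to justify) the well-definedness on projective space.
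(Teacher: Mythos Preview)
Your proof is correct. Note, however, that the paper does not actually prove this lemma: it is quoted from~\cite{ro12b} without argument, so there is no ``paper's own proof'' to compare against. Your Cauchy--Binet argument is the standard way to establish this compatibility of the Pl\"ucker map with the $GL_n$-action, and your bookkeeping on the indexing conventions and on projective well-definedness is appropriate.
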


Since it holds for any $k$, we can use this lemma
to describe a ball around a subspace
 of arbitrary dimension.

\begin{cor}\label{thm5}
  Let $\Uvs= \Uvs_{0}^{k'}A \in \mathcal{G}_q(k',n) $. Then
  \begin{align*}
    B_{S,\tau}^k(\Uvs) =B_{S,\tau}^k(\Uvs_{0}^{k'} A) =\Big\{&\Vvs=\rs (V)
    \in \Gr \mid M_{i_{1},\dots,i_{k}}(V)\bar \varphi(A^{-1}) = 0\;
    \forall (i_{1},\dots,i_{k})\not \preceq \\&
    \left(\frac{k'-k+\tau}{2}+1,\dots ,k',
      n-\frac{k-k'+\tau}{2}+1,\dots ,n \right)\Big\},
    \\
    {B}_{I,t}^k(\Uvs) =B_{I,t}^k(\Uvs_{0}^{k'} A)
    =\Big\{&\Vvs=\rs (V) \in \Gr \mid M_{i_{1},\dots,i_{k}}(V)\bar
    \varphi(A^{-1}) = 0\; \forall (i_{1},\dots,i_{k})\not \preceq \\&
    \left(k'-\max (k',k)+t+1,\dots ,k', n-k+\max(k',k)-t+1,\dots ,n
    \right) \Big\}.
  \end{align*}
\end{cor}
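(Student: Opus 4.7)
The plan is to assemble the corollary from three earlier results: the intrinsic description of the ball around $\Uvs_0^{k'}$ in terms of vanishing Plücker coordinates (Theorems~\ref{thm:gen_ball_eq} and~\ref{thm:gen_ball_eq_inj}), the $GL_n$-equivariance of balls under right multiplication (Lemma~\ref{lem:generalballs}), and the compatibility of the Plücker embedding with the $GL_n$-action (Lemma~\ref{lem:2}). No new geometry is required; the corollary is essentially a change-of-coordinates statement.

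First I would observe that, by Lemma~\ref{lem:generalballs}, for $\Vvs = \rs(V) \in \Gr$ one has
\[
\Vvs \in B_{S,\tau}^k(\Uvs_0^{k'} A) \iff \Vvs A^{-1} \in B_{S,\tau}^k(\Uvs_0^{k'}),
\]
and analogously for the injection distance. Thus it suffices to translate the intrinsic conditions of Theorem~\ref{thm:gen_ball_eq} and Theorem~\ref{thm:gen_ball_eq_inj} for $\Vvs A^{-1}$ into conditions on the Plücker coordinates of $\Vvs$ itself. Applying Theorem~\ref{thm:gen_ball_eq} to the representative matrix $V A^{-1}$ of $\Vvs A^{-1}$ shows that $\Vvs A^{-1} \in B_{S,\tau}^k(\Uvs_0^{k'})$ is equivalent to the vanishing of all minors $M_{i_1,\dots,i_k}(V A^{-1})$ for $(i_1,\dots,i_k) \not\preceq (\tfrac{k'-k+\tau}{2}+1,\dots,k',n-\tfrac{k-k'+\tau}{2}+1,\dots,n)$, and analogously for the injection metric.

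Next I would invoke Lemma~\ref{lem:2}, which gives $\varphi(\Vvs A^{-1}) = \varphi(\Vvs)\bar\varphi(A^{-1})$. Unpacking the Plücker embedding, this identity says precisely that the row vector of minors of $V A^{-1}$ equals the row vector of minors of $V$ multiplied on the right by $\bar\varphi(A^{-1})$. Hence the $(i_1,\dots,i_k)$-entry of the former is, in the notation of the corollary, $M_{i_1,\dots,i_k}(V)\,\bar\varphi(A^{-1})$. Substituting this identification into the conditions obtained in the previous step yields exactly the defining equations stated in the corollary, both for the subspace and injection metrics.

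The only subtle point, and the step I would flag as the main (mild) obstacle, is making sure the notation $M_{i_1,\dots,i_k}(V)\,\bar\varphi(A^{-1})$ is unambiguously interpreted as the corresponding coordinate of the row vector $\varphi(\Vvs)\bar\varphi(A^{-1})$; once this is spelled out, the proof is a direct concatenation of the three cited lemmas. I would conclude with a brief remark that the result is independent of the choice of $A$ with $\Uvs_0^{k'} A = \Uvs$ (as described in Remark~\ref{rm:A}), since any two such matrices differ by a left factor stabilizing $\Uvs_0^{k'}$, which only rescales the Plücker vector of $\Uvs_0^{k'}$ and therefore leaves the vanishing conditions unchanged.
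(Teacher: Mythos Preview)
Your proposal is correct and follows exactly the approach the paper intends: the corollary is stated without proof in the paper, as an immediate consequence of Theorems~\ref{thm:gen_ball_eq} and~\ref{thm:gen_ball_eq_inj}, Lemma~\ref{lem:generalballs}, and Lemma~\ref{lem:2}, which is precisely the chain you assemble. Your flagged subtlety about reading $M_{i_1,\dots,i_k}(V)\bar\varphi(A^{-1})$ as the $(i_1,\dots,i_k)$-entry of the row vector $\varphi(\Vvs)\bar\varphi(A^{-1})$ is the right clarification, and the paper uses the notation in exactly this sense (cf.\ Example~\ref{ex:Alg1}).

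One minor quibble: in your closing remark on independence of the choice of $A$, the claim that a stabilizer element ``only rescales the Pl\"ucker vector of $\Uvs_0^{k'}$'' is not literally accurate, since the stabilizer of $\Uvs_0^{k'}$ in $GL_n$ is a full parabolic subgroup and its action on $\mathbb{P}^{\binom{n}{k}-1}$ is not a scalar. The correct justification is simply that any such $B$ preserves the ball $B_{S,\tau}^k(\Uvs_0^{k'})$ (because it preserves $\Uvs_0^{k'}$ and both metrics are $GL_n$-invariant), so the two resulting systems of linear equations cut out the same set even though the individual equations may differ. This does not affect the main argument.
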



In the following we calculate the number of equations which define a
ball of a given radius.
\begin{lem}\label{lemnumber}
  The maximum number of linear Pl\"ucker equations defining a ball
  $B_{S,\tau}^k(\mathcal{U}_0^{k'})$ (respectively
  ${B}_{I,t}^k(\mathcal{U}_0^{k'})$) is equal to the maximum number of
  equations defining $B_{S,\tau}^{k}(\Uvs)$ (respectively
  ${B}_{I,t}^k(\mathcal{U})$) for any $\Uvs \in
  \mathcal{G}_q(k',n)$.
\end{lem}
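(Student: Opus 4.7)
The plan is to transport the defining equations of the standard ball $B_{S,\tau}^k(\Uvs_0^{k'})$ over to an arbitrary ball $B_{S,\tau}^k(\Uvs)$ of the same centre-dimension $k'$ via the $GL_n$-action on $\G$, and then to observe that this transport is simply an invertible linear change of variables on the Pl\"ucker coordinates. Such a change preserves both the cardinality of a system of linear equations and the dimension of its span, so the count claimed by the lemma is independent of the chosen centre.

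First I would apply Lemma \ref{lem:generalballs} to pick $A \in GL_n$ with $\Uvs_0^{k'}A = \Uvs$, which also yields $B_{S,\tau}^k(\Uvs) = B_{S,\tau}^k(\Uvs_0^{k'})\,A$ (and likewise for the injection metric). Next I would use Corollary \ref{thm5} to write down the equations defining $B_{S,\tau}^k(\Uvs)$: they are indexed by exactly the same set of multiindices $(i_1,\dots,i_k)$ as in Theorem \ref{thm:gen_ball_eq}, but each minor $M_{i_1,\dots,i_k}(V)$ in the standard description is replaced by the $(i_1,\dots,i_k)$-th entry of the row vector $\varphi(\Vvs)\,\bar\varphi(A^{-1})$. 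The injection-metric case is handled in an identical way via Theorem \ref{thm:gen_ball_eq_inj}.

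The crux of the argument is then the observation that $\bar\varphi(A^{-1})$ is an invertible $\binom{n}{k}\times\binom{n}{k}$ matrix. I would derive this from the multiplicativity $\bar\varphi(AB)=\bar\varphi(A)\,\bar\varphi(B)$ on $GL_n$, which follows directly from Lemma \ref{lem:2} applied to an arbitrary representative in $\G$ (or alternatively from the Cauchy--Binet formula), specialised to $A A^{-1}=I_n$ so that $\bar\varphi(A)\,\bar\varphi(A^{-1})=I_{\binom{n}{k}}$. Because multiplying the Pl\"ucker coordinates by an invertible matrix is a bijective linear substitution, the system of equations describing $B_{S,\tau}^k(\Uvs)$ and the system describing $B_{S,\tau}^k(\Uvs_0^{k'})$ have the same cardinality and the same rank; the analogous equality holds for the injection balls. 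This proves the lemma.

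The only real obstacle is supplying a clean justification of the invertibility of $\bar\varphi$ on $GL_n$; once multiplicativity is in hand this is a one-line computation, so no serious difficulty arises.
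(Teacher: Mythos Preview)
Your proposal is correct and follows essentially the same route as the paper: the authors' proof is simply ``Follows directly from Corollary \ref{thm5},'' i.e.\ the observation that the equations for $B_{S,\tau}^k(\Uvs)$ are indexed by exactly the same set of multiindices as those for $B_{S,\tau}^k(\Uvs_0^{k'})$. Your additional argument for the invertibility of $\bar\varphi(A^{-1})$ (hence preservation of linear independence) is a welcome refinement; the paper postpones this point to a separate remark after the subsequent lemma.
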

\begin{proof}
  Follows directly from Corollary \ref{thm5}.  \qed \end{proof}

We can hence count the maximum number of linear equations needed to describe
the ball inside the Grassmannian.

\begin{lem}
  Let $\Uvs \in \G$.  An upper bound on the number of linear equations
  needed to describe $B_{S,\tau}^k(\mathcal{U})$ is
  \[\theta_{S} := \sum_{l=0}^{\frac{k+k'-\tau}{2}-1}
  \binom{n-k'}{k-\ell} \binom{k'}{\ell} .
  \]
  An upper bound on the number of linear equations needed to describe
  ${B}_{I,t}^k(\mathcal{U})$ is
  \[\theta_{I} := \sum_{l=0}^{\max(k',k)-t-1}\binom{n-k'}{k-\ell}
  \binom{k'}{\ell} .
  \]
\end{lem}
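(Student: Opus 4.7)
The plan is to reduce the count, via Lemma \ref{lemnumber}, to the standard case $\mathcal{U} = \mathcal{U}_0^{k'}$ and then simply enumerate the multi-indices that appear in the defining equations given by Theorems \ref{thm:gen_ball_eq} and \ref{thm:gen_ball_eq_inj}.

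For the subspace case, I would invoke Theorem \ref{thm:gen_ball_eq} to see that $B_{S,\tau}^k(\mathcal{U}_0^{k'})$ is cut out by the vanishing of $M_{i_1,\dots,i_k}(V)$ for every $(i_1,\dots,i_k) \in \binom{[n]}{k}$ that is not $\preceq \left(\frac{k'-k+\tau}{2}+1,\dots,k',\, n-\frac{k-k'+\tau}{2}+1,\dots,n\right)$. As already observed in the proof of that theorem, this Bruhat non-comparability is equivalent to the condition that at least $\frac{k-k'+\tau}{2}+1$ of the $i_j$ lie in $\{k'+1,\dots,n\}$, i.e.\ that at most $\frac{k+k'-\tau}{2}-1$ of them lie in $\{1,\dots,k'\}$. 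I would then stratify by the exact number $\ell$ of indices falling in $\{1,\dots,k'\}$: for each such $\ell$ there are $\binom{k'}{\ell}$ choices for those positions and $\binom{n-k'}{k-\ell}$ choices for the remaining positions from $\{k'+1,\dots,n\}$. Summing over $\ell = 0,1,\dots,\frac{k+k'-\tau}{2}-1$ yields exactly $\theta_S$.

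For the injection case I would run the same argument with Theorem \ref{thm:gen_ball_eq_inj}: there the forbidden tuple is $\left(k'-\max(k',k)+t+1,\dots,k',\, n-k+\max(k',k)-t+1,\dots,n\right)$, and the non-comparability condition translates to at most $\max(k',k)-t-1$ of the indices lying in $\{1,\dots,k'\}$. The identical binomial stratification then produces $\theta_I$.

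The upper-bound language, rather than an equality, reflects only the fact that some of these linear equations could be rendered redundant by the shuffle relations of Proposition \ref{shuffle}; the count itself is just the number of Pl\"ucker coordinates that are forced to vanish, so there is no real obstacle — the argument is a direct combinatorial partition of $\binom{[n]}{k}$ by intersection sizes with $\{1,\dots,k'\}$ once the characterization from the proofs of Theorems \ref{thm:gen_ball_eq} and \ref{thm:gen_ball_eq_inj} is in hand.
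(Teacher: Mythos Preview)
Your proposal is correct and follows exactly the approach the paper takes: the paper's proof is simply ``Follows from Lemma \ref{lemnumber} and Theorems \ref{thm:gen_ball_eq} and \ref{thm:gen_ball_eq_inj},'' and you have supplied the combinatorial enumeration (stratifying $\binom{[n]}{k}$ by the number $\ell$ of indices in $\{1,\dots,k'\}$) that the paper leaves implicit.
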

\begin{proof}
  Follows from Lemma \ref{lemnumber} and Theorems
  \ref{thm:gen_ball_eq} and \ref{thm:gen_ball_eq_inj}.
   \qed
\end{proof}

Note that all these equations defining a ball (in subspace or
injection metric) are linearly independent. This can be seen by the description of the balls around $\Uvs_{0}^{k'}$, since the equations are of the form $M_{i_1,\dots,i_k}(V) =0 $ for different minors functioning as the variables and are thus linearly independent. As the equations describing the balls around arbitrary elements can be found by linear transformations, these equations will also be linearly independent.


\subsection{Description by rational
  parametrization}\label{sec:rational}

One can also use a rational parametrization to describe the balls
around $\Uvs_{0}^{k'}$ in the Grassmannian as follows.

\begin{prop}\label{prop:rational}
  Define $\Uvs_{0}^{k'}$ as previously and $\nu:= \frac{k-k'+\tau}{2},
  \omega:=\min(0,k-k')+t$. Then
  \begin{align*}
    B_{S,\tau}^{k}(\Uvs_{0}^{k'}) =& \Big\{\Vvs=\rs [V_{1} \; V_{2}] \in
    \Gr &\mid V_{1}\in \F_{q}^{k\times k'}, \exists X\in
    \F_{q}^{k\times \nu}, Y\in \F_{q}^{\nu \times (n-k')}: V_{2} = XY
    \Big\}
  \end{align*}
  and
  \begin{align*}
    {B}_{I,t}^{k}(\Uvs_{0}^{k'}) =& \Big\{\Vvs=\rs [V_{1} \;
    V_{2}] \in \Gr &\mid V_{1}\in \F_{q}^{k\times k'}, \exists X\in
    \F_{q}^{k\times \omega}, Y\in \F_{q}^{\omega \times (n-k')}: V_{2}
    = XY \Big\}.
  \end{align*}
\end{prop}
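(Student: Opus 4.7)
The plan is to reduce both statements to a clean inequality on the rank of the right block $V_2$, so that the rational parametrization $V_2 = XY$ becomes equivalent to a rank bound, which in turn becomes equivalent to the distance bound.

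First I would fix an arbitrary $\Vvs = \rs[V_1 \; V_2] \in \Gr$ with $V_1 \in \F_q^{k \times k'}$ and $V_2 \in \F_q^{k \times (n-k')}$, and observe that $\Vvs \cap \Uvs_0^{k'}$ is exactly the kernel of the linear map $\Vvs \to \F_q^{n-k'}$ induced by projection onto the last $n-k'$ coordinates, whose image has dimension $\rk(V_2)$. Hence
\[
\dim(\Vvs \cap \Uvs_0^{k'}) = k - \rk(V_2).
\]
Plugging this into the definitions gives $d_S(\Uvs_0^{k'}, \Vvs) = k' - k + 2\,\rk(V_2)$ and $d_I(\Uvs_0^{k'}, \Vvs) = \max(k',k) - k + \rk(V_2)$.

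Next I would solve each distance inequality for $\rk(V_2)$. For the subspace case, $d_S \leq \tau$ is equivalent to $\rk(V_2) \leq \tfrac{\tau + k - k'}{2} = \nu$. For the injection case, $d_I \leq t$ is equivalent to $\rk(V_2) \leq t + k - \max(k',k) = t + \min(0, k-k') = \omega$. (Note that $\nu$ is a non-negative integer under the hypothesis of Theorem~\ref{thm:gen_ball_eq}, and similarly for $\omega$ under Theorem~\ref{thm:gen_ball_eq_inj}, so these inequalities make sense.)

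Finally I would invoke the standard rank factorization fact: for any matrix $V_2 \in \F_q^{k \times (n-k')}$ and any non-negative integer $r$, one has $\rk(V_2) \leq r$ if and only if there exist $X \in \F_q^{k \times r}$ and $Y \in \F_q^{r \times (n-k')}$ with $V_2 = XY$. Applied with $r = \nu$ and $r = \omega$ respectively, this yields precisely the two sets on the right-hand side of the proposition.

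The main step to get right is the computation of $\dim(\Vvs \cap \Uvs_0^{k'})$ and the subsequent algebra for $\omega$, since the formula $\omega = \min(0,k-k') + t$ looks slightly asymmetric; once that bookkeeping is done, the rest is a direct application of rank factorization and the definitions of $d_S$ and $d_I$. \qed
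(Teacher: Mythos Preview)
Your proposal is correct and follows essentially the same approach as the paper: reduce the distance condition to $\rk(V_2)\leq \nu$ (respectively $\leq \omega$) and then invoke rank factorization. The only cosmetic difference is that the paper computes $\dim(\Uvs_0^{k'}+\Vvs)$ via the rank of the stacked matrix $\left[\begin{smallmatrix} I_{k'} & 0\\ V_1 & V_2\end{smallmatrix}\right]=k'+\rk(V_2)$, whereas you compute $\dim(\Uvs_0^{k'}\cap\Vvs)=k-\rk(V_2)$ directly via the projection kernel; the two are of course equivalent by rank--nullity.
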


\begin{proof}
  We want to find all $\Vvs=\rs[V_{1} \; V_{2}] \in \G$ such that
$$ d_{S}(\Uvs_{0}^{k'}, \Vvs) \leq \tau \iff  \mathrm{rank}
\left[\begin{array}{ccc} I_{k'} &&0_{k'\times
      (n-k')} \\
    V_{1}&&V_{2}
  \end{array}\right]
\leq \frac{k+k'+\tau}{2} $$
 $$
 \iff k'+\mathrm{rank} (V_{2}) \leq \frac{k+k'+\tau}{2} \iff
 \mathrm{rank} (V_{2}) \leq \nu. $$
 The last statement is equivalent to the fact that there exists $X\in
 \F_{q}^{k\times \nu}, Y\in \F_{q}^{\nu \times (n-k')}$ such that
 $V_{2}=XY$. The proof for the injection distance
 is analogous.
 \qed
\end{proof}

\begin{rem}
As the proof shows for the description of the balls $B_{S,\tau}^{k}$ and
${B}_{I,t}^{k}$ it is crucial to describe all $k\times m$
matrices $V_2$ whose rank is at most $\nu$. The set of all $k\times m$
matrices of rank at most $\nu$ is sometimes called a {\em
  determinantal variety $D^{k\times m}_\nu$}. These varieties are known to be rational,
this means there is a birational isomorphism from a Zariski
open subset of this variety to an open subset of a vector space.
To make this concrete in our setting identify the set of all
$k\times \nu$ matrices $X=\left[ X_1\atop X_2\right]$,
where the top part $X_1$ is an invertible matrix, with an
open subset of the  vector space $\F^{k\nu}$. Similarly identify
the set of $\nu \times m$ matrices having the form $Y=[ I_\nu \ Y_2]$
with the vector space  $\F^{\nu\times(m-\nu)}$. Then

\begin{eqnarray*}
f : \F_{q}^{k\nu}\times \F_{q}^{\nu\times (m-\nu)} &\longrightarrow&
D^{k\times m}_\nu\\
(X,Y)&\longmapsto & XY
\end{eqnarray*}
defines a birational isomorphism. The map in particular provides
a ``rational parametrization'' of the variety $D^{k\times m}_\nu$
and in particular the dimension of $D^{k\times m}_\nu$ is equal
to $k\nu+m\nu -\nu^2$. Note that not all points of the variety
$D^{k\times m}_\nu$ are parametrized but the description avoids
dealing with many equations describing the vanishing of the
$(\nu+1)\times (\nu+1)$ minors.
\end{rem}

In analogy to Section \ref{sec:linear} we can also describe the balls
around arbitrary elements in $\G$ in a similar manner.

\begin{thm}\label{thm:rational}
  Let $\Rvs = \rs [\; R_{1} \; R_2 \; ] \in \mathcal{G}_{q}(k', n)$
  such that $R_1 \in \F_q^{k'\times k'}, R_2 \in \F_q^{k'\times
    (n-k')}$. Moreover, let $\nu$ and $\omega$ be as before. Then
  there exists
  \[A = \left(\begin{array}{c|c} R_{1} & R_{2}\\\hline R_{3} & R_{4}
    \end{array}\right) \quad \in \mathrm{GL}_{n}\]
  such that $\Rvs = \rs[ I_{k'} \; 0_{k' \times (n-k')}] A$. It holds
  that
  \begin{align*}
    B_{S,\tau}^{k}(\Rvs) =& \Big\{\Vvs=\rs \left(
   [ V_{1}+V_{2}]\left[
\begin{array}{cc} R_{1} & R_{2}\\ R_{3} & R_{4}
    \end{array} \right]\right) \in \Gr \mid \\
    &\hspace{5cm}V_{1}\in \F_{q}^{k\times k'}, \exists X\in
    \F_{q}^{k\times \nu}, Y\in \F_{q}^{\nu \times (n-k')}: V_{2} = XY
    \Big\}
  \end{align*}
  and
  \begin{align*}
    {B}_{I,t}^{k}(\Rvs) =& \Big\{\Vvs=\rs [\;
    V_{1}R_{1}+V_{2}R_{3} \hspace{0.4cm} V_1 R_2 + V_{2}R_{4}\; ] \in
    \Gr \mid \\ &\hspace{5cm}V_{1}\in \F_{q}^{k\times k'}, \exists
    X\in \F_{q}^{k\times \omega}, Y\in \F_{q}^{\omega \times (n-k')}:
    V_{2} = XY \Big\}.
  \end{align*}
\end{thm}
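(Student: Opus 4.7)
The plan is to reduce Theorem \ref{thm:rational} to the reference case already handled in Proposition \ref{prop:rational} by exploiting the $GL_n$-equivariance of the ball established in Lemma \ref{lem:generalballs}. Since $\Rvs \in \mathcal{G}_q(k',n)$ is spanned by the rows of $[R_1\ R_2]$, and these rows are linearly independent (they span a $k'$-dimensional space), they extend to an $\F_q$-basis of $\F_q^n$. Picking any such completion yields an invertible matrix
\[
A = \left(\begin{array}{c|c} R_1 & R_2\\\hline R_3 & R_4\end{array}\right) \in GL_n
\]
with $\Uvs_0^{k'} A = \rs[I_{k'}\ 0_{k'\times(n-k')}]\, A = \rs[R_1\ R_2] = \Rvs$; the existence and an explicit construction of such $A$ are recorded in Remark \ref{rm:A}.

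Once $A$ is in hand, Lemma \ref{lem:generalballs} gives the two identities $B_{S,\tau}^k(\Rvs) = B_{S,\tau}^k(\Uvs_0^{k'})\, A$ and ${B}_{I,t}^k(\Rvs) = {B}_{I,t}^k(\Uvs_0^{k'})\, A$. Proposition \ref{prop:rational} parametrizes the two balls at the reference point: every $\Vvs \in B_{S,\tau}^k(\Uvs_0^{k'})$ has a matrix representation $[V_1\ V_2]$ with $V_1 \in \F_q^{k\times k'}$ arbitrary and $V_2 = XY$ for some $X \in \F_q^{k\times \nu}$, $Y \in \F_q^{\nu\times(n-k')}$, and analogously for ${B}_{I,t}^k(\Uvs_0^{k'})$ with $\omega$ in place of $\nu$. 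The remaining step is simply to carry this parametrization across the right-multiplication by $A$: by the block product
\[
[V_1\ V_2]\left(\begin{array}{cc} R_1 & R_2\\ R_3 & R_4\end{array}\right) = [V_1R_1 + V_2R_3 \quad V_1R_2 + V_2R_4],
\]
which matches the expression in the injection statement verbatim and, interpreting the notation $[V_1+V_2]$ in the subspace statement as the concatenation $[V_1\ V_2]$ post-multiplied by the block matrix, also matches the subspace statement.

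The two inclusions then follow immediately. For $\supseteq$, every $\Vvs$ of the parametric form equals $\rs([V_1\ V_2])\, A$ with $\rs[V_1\ V_2]$ in the reference ball, hence lies in $B_{S,\tau}^k(\Rvs)$ (resp.\ ${B}_{I,t}^k(\Rvs)$) by Lemma \ref{lem:generalballs}. For $\subseteq$, given $\Vvs$ in the ball around $\Rvs$, the subspace $\Vvs A^{-1}$ lies in the ball around $\Uvs_0^{k'}$ and hence admits a representation $[V_1\ V_2]$ of the parametric shape from Proposition \ref{prop:rational}; multiplying back by $A$ recovers the required form for $\Vvs$. There is no genuine obstacle here — the argument is essentially bookkeeping, and the only mild care needed is to check that the representation $\Rvs = \Uvs_0^{k'} A$ is possible for \emph{every} choice of the blocks $R_1, R_2$ describing $\Rvs$ (rather than only for $\Rvs$ in reduced row echelon form), which is the content of the extension-to-a-basis step above.
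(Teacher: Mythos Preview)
Your proof is correct and follows essentially the same approach as the paper: invoke Lemma \ref{lem:generalballs} to write $B_{S,\tau}^{k}(\Rvs) = B_{S,\tau}^{k}(\Uvs_0^{k'})\,A$, plug in the parametrization from Proposition \ref{prop:rational}, and multiply out the block product. Your write-up is in fact more careful than the paper's, since you explicitly justify the existence of $A$ with the given top rows $[R_1\ R_2]$ via basis extension (Remark \ref{rm:A} only builds $A$ from the reduced row echelon form) and you correctly flag the notational slip $[V_1+V_2]$ for $[V_1\ V_2]$ in the statement.
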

\begin{proof}
  We know from Lemma \ref{lem:generalballs} that $B_{S,\tau}^{k} ( \Rvs)
  = B_{S,\tau}^{k}(\Uvs_0^{k'}) A$. Together with Proposition
  \ref{prop:rational} it follows that
  \begin{align*}
    B_{S,\tau}^{k}(\Rvs) =& \Big\{\Vvs=\rs [\; V_{1} \; V_{2}\;] A \in
   \Gr  \mid  V_{1}\in \F_{q}^{k\times k'}, \exists X\in \F_{q}^{k\times \nu},
   Y\in \F_{q}^{\nu \times (n-k')}: V_{2} = XY  \Big\}\\
    =& \Big\{\Vvs=\rs [\; V_{1}R_{1}+V_{2}R_{3} \hspace{0.4cm} V_1 R_2
    + V_{2}R_{4}\; ] \in \Gr \mid \\ & \hspace{5cm} V_{1}\in
    \F_{q}^{k\times k'}, \exists X\in \F_{q}^{k\times \nu}, Y\in
    \F_{q}^{\nu \times (n-k')}: V_{2} = XY \Big\} .
  \end{align*}
  The proof for the injection distance is analogous.  \qed
\end{proof}

\begin{cor}
  In the setting of Theorem \ref{thm:rational}, if $R_{1}$ has full
  rank, one can choose a matrix representation of the form $\Rvs = [
  I_{k'} \; \tilde R_{2} ]$. Then the formulas are simplified to
  \begin{multline*}
    B_{S,\tau}^{k}(\Rvs) =\\
   \Big\{\Vvs=\rs [\; V_{1} \hspace{0.4cm} V_1
    \tilde R_2 + V_{2}\;] \in \Gr \mid V_{1}\in \F_{q}^{k\times k'},
    \exists X\in \F_{q}^{k\times \nu}, Y\in \F_{q}^{\nu \times
      (n-k')}: V_{2} = XY \Big\}
  \end{multline*}
  and
  \begin{multline*}
    {B}_{I,t}^{k}(\Rvs) =\\
 \Big\{\Vvs=\rs [\; V_{1}
    \hspace{0.4cm} V_1 \tilde R_2 + V_{2}\;] \in \Gr \mid V_{1}\in
    \F_{q}^{k\times k'}, \exists X\in \F_{q}^{k\times \omega}, Y\in
    \F_{q}^{\omega \times (n-k')}: V_{2} = XY \Big\}.
  \end{multline*}
\end{cor}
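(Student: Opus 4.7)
The plan is to deduce the corollary directly from Theorem \ref{thm:rational} by specializing the arbitrary invertible matrix $A$ to a particularly simple block upper triangular form, which is available precisely because $R_1$ has full rank. The corollary does not require anything new beyond Theorem \ref{thm:rational}; the content is purely a simplification that occurs when one exploits the extra freedom provided by the full-rank assumption on $R_1$.

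First I would observe that if $R_1 \in \F_q^{k' \times k'}$ is invertible, then multiplying the matrix $[\,R_1\;R_2\,]$ on the left by $R_1^{-1}$ is a row operation and hence does not change $\Rvs$. Thus we may indeed assume $\Rvs = \rs[\,I_{k'}\;\tilde R_2\,]$ with $\tilde R_2 := R_1^{-1}R_2$, justifying the first sentence of the corollary.

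Next, to apply Theorem \ref{thm:rational}, I need to exhibit an $A \in GL_n$ whose first $k'$ rows span $\Rvs$. The natural choice is
\[
A = \begin{pmatrix} I_{k'} & \tilde R_2 \\ 0_{(n-k') \times k'} & I_{n-k'} \end{pmatrix},
\]
which is block upper triangular with identity diagonal blocks, hence manifestly in $GL_n$, and clearly satisfies $\rs[\,I_{k'}\;0_{k' \times (n-k')}\,]A = \rs[\,I_{k'}\;\tilde R_2\,] = \Rvs$. Identifying the block pieces of $A$ with the notation of Theorem \ref{thm:rational} gives $R_1 = I_{k'}$, $R_2 = \tilde R_2$, $R_3 = 0$, and $R_4 = I_{n-k'}$.

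Finally I would substitute these block values into the expressions $V_1 R_1 + V_2 R_3$ and $V_1 R_2 + V_2 R_4$ that appear in Theorem \ref{thm:rational}. The substitution yields $V_1 R_1 + V_2 R_3 = V_1$ and $V_1 R_2 + V_2 R_4 = V_1 \tilde R_2 + V_2$, which is precisely the stated simplified form for $B_{S,\tau}^{k}(\Rvs)$. The injection distance version follows by the identical computation applied to the injection-distance formula of Theorem \ref{thm:rational}, with the only change being that the parameter $\nu$ is replaced by $\omega$. There is no genuine obstacle here; the only thing to check carefully is that the chosen block form of $A$ is consistent with the notational conventions of Theorem \ref{thm:rational}, which it is.
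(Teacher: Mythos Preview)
Your proof is correct and follows essentially the same approach as the paper: choose the block upper triangular matrix $A = \begin{pmatrix} I_{k'} & \tilde R_2 \\ 0 & I_{n-k'} \end{pmatrix}$ and substitute its blocks into Theorem~\ref{thm:rational}. Your version is slightly more detailed in justifying why the form $[\,I_{k'}\;\tilde R_2\,]$ is available and in carrying out the substitution explicitly, but the argument is the same.
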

\begin{proof}
  Because of the shape of $\Rvs$ we can choose
  \[A= \left(\begin{array}{cc} I_{k'} & \tilde R_2 \\
      0 & I_{n-k'}
    \end{array}\right)\]
  such that $\rs [\; I_{k'}\; 0 \;] A = \rs [\; R_{1} \; R_2 \;]$.
  Then the statement follows from Theorem \ref{thm:rational}.  \qed
\end{proof}

\begin{ex}\label{ex:rational1}
  Consider $\mathcal{G}_q(2,4)$ and let
  $\Rvs = \rs\left(\begin{array}{cccc}1&0&0 & 1 \\
      0&1 &1 & 1 \end{array}\right)$.  Then $k=k'=2$ and
  \begin{multline*}
    B_{S,2}^{2}(\Rvs) =\\
     \Big\{\rs \left[ V_{1} \hspace{0.2cm} V_1
      \left( \begin{array}{cc}0&1\\1&1 \end{array}
      \right) + V_{2}\;\right] \in  \mathcal{G}_{q}(2,4)  \mid V_1\in\F_q^{2\times 2},
    \quad V_{2} = (X_1 X_2)^T (Y_1 Y_2), \quad X_1,X_2,Y_1,Y_2 \in \F_q  \Big\}  \\
    = \Big\{\rs \left( \begin{array}{cccc} a & b & b+X_1 Y_1 &
        a+b+X_1 Y_2 \\ c&d & d+X_2 Y_1 & c+d+X_2 Y_2
      \end{array}\right) \in \mathcal{G}_{q}(2,4)
    \mid a,b,c,d,X_1,X_2,Y_1,Y_2 \in \F_q
    \Big\} .
  \end{multline*}
\end{ex}

In the description of the balls from Theorem \ref{thm:rational},
define $\bar V := V_1 R_1 + V_{2} R_{3}$ (i.e. the left part of the
elements in the ball) and $\tilde V := V_1 R_2 + V_{2} R_{4}$ (i.e.
the right part of the elements in the ball). Then for a given $R_2$
and some $V_1 \in \F_q^{k\times k}$ one gets a set of bilinear
equations of the form
\[ \bar V_{ij} = \sum_{\ell=1}^{k'} (V_1)_{i\ell} (R_1)_{\ell j} +
\sum_{\ell=1}^{n-k'}\sum_{m=1}^{\nu} X_{im} Y_{m\ell} (R_{3})_{\ell j}
\]
\[ \tilde V_{ij} = \sum_{\ell=1}^{k'} (V_1)_{i\ell} (R_2)_{\ell j} +
\sum_{\ell=1}^{n-k'}\sum_{m=1}^{\nu} X_{im} Y_{m\ell} (R_{4})_{\ell j}
\] for the subspace distance and of the form
\[ \bar V_{ij} = \sum_{\ell=1}^{k'} (V_1)_{i\ell} (R_1)_{\ell j} +
\sum_{\ell=1}^{n-k'}\sum_{m=1}^{\omega} X_{im} Y_{m\ell} (R_{3})_{\ell
  j} \]
\[ \tilde V_{ij} = \sum_{\ell=1}^{k'} (V_1)_{i\ell} (R_2)_{\ell j} +
\sum_{\ell=1}^{n-k'}\sum_{m=1}^{\omega} X_{im} Y_{m\ell} (R_{4})_{\ell
  j} \] for the injection distance. From this we can determine the
degree and the number of variables of this system of equations:

\begin{lem}\label{lem:rational}
  For a given $V_{1}$, the description of the balls from Theorem
  \ref{thm:rational} results in a system of bilinear equations in
  $kk'+(n-k'+k)\nu$ unknowns for the subspace distance, respectively
  $kk'+(n-k'+k)\omega$ unknowns for the injection distance, given by
  $V_{1}, X$ and $Y$.
\end{lem}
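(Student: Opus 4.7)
The plan is to read off the number of unknowns directly from the rational parametrization furnished by Theorem \ref{thm:rational}. Once the received space $\Rvs$ is fixed, the four blocks $R_1, R_2, R_3, R_4$ are constants, so in the entrywise identities displayed just before the lemma,
\[\bar V_{ij} = \sum_{\ell=1}^{k'} (V_1)_{i\ell}(R_1)_{\ell j} + \sum_{\ell=1}^{n-k'}\sum_{m=1}^{\nu} X_{im}Y_{m\ell}(R_3)_{\ell j},\]
together with the analogous identities for $\tilde V_{ij}$, the only quantities that vary are the entries of $V_1$, $X$ and $Y$. Each such equation is linear in the entries of $V_1$ and of bidegree $(1,1)$ in $(X,Y)$ through the products $X_{im}Y_{m\ell}$, so the system is bilinear as asserted.

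Next I would count the free scalar parameters group by group. Since $V_1 \in \F_q^{k\times k'}$, it contributes $kk'$ unknowns; since $X \in \F_q^{k\times\nu}$, it contributes $k\nu$; and since $Y \in \F_q^{\nu\times(n-k')}$, it contributes $\nu(n-k')$. Summing,
\[kk' + k\nu + \nu(n-k') = kk' + (n-k'+k)\nu,\]
which is the claimed count for the subspace distance. For the injection distance the second half of Theorem \ref{thm:rational} gives the very same parametrization except that the inner dimension of the factorization $V_2 = XY$ is $\omega$ rather than $\nu$; replacing $\nu$ by $\omega$ throughout yields $kk' + (n-k'+k)\omega$.

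There is no real obstacle here: the lemma is a direct bookkeeping consequence of the parametrization established in Theorem \ref{thm:rational}, and its purpose is simply to quantify the size of the bilinear system one will have to solve later in the list-decoding algorithm.
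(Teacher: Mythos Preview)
Your proposal is correct and matches the paper's approach; indeed, the paper gives no separate proof at all, treating the lemma as an immediate bookkeeping consequence of the displayed entrywise equations for $\bar V_{ij}$ and $\tilde V_{ij}$. Your count of the entries of $V_1$, $X$, and $Y$ is exactly the intended justification.
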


To sum up, we know how to describe the balls, in both the subspace and
the injection metric, in $\G$ with a given radius around an element of
$\PG$ with either linear equations in the Pl\"ucker embedding or
bilinear equations in the matrix coordinates. In the following
sections we will show how this can be used to establish list decoding
algorithms for lifted Gabidulin codes.



\section{Lifted Gabidulin Codes}\label{sec:LG}

For two $k \times \ell$ matrices $A$ and $B$ over $\F_q$ the {\it rank
  distance} is defined by
$$
d_R (A,B) := \text{rank}(A-B)~.
$$
A $[k \times \ell,\varrho,\delta]$ {\it rank-metric code} $C$ is a
linear subspace with dimension $\varrho$ of $\F_q^{k \times \ell}$, in
which each two distinct codewords $A$ and $B$ have distance $d_R (A,B)
\geq \delta$. For a $[k \times \ell,\varrho,\delta]$ rank-metric code
$C$ it was proven in~\cite{de78,ga85a,ro91} that
\begin{equation}
  \label{eq:MRD} \varrho \leq
  \text{min}\{k(\ell-\delta+1),\ell(k-\delta+1)\}~.
\end{equation}
Codes which attain this bound are called {\it maximum rank distance}
codes (or MRD codes in short).

An important family of MRD linear codes was presented by
Gabidulin~\cite{ga85a}.  These codes can be seen as the analogs of
Reed-Solomon codes for the rank metric.  From now on let $k\leq \ell$.
A codeword $A$ in a $[k \times \ell, \varrho , \delta]$ rank-metric
code $C$ can be represented by a vector $c_{A}=(c_1 , c_2 , \ldots ,
c_{k})$, where $c_i={\phi^{(\ell)}}^{-1}(A[i]) \in \F_{q^{\ell}}$.
Let $g_i\in \F_{q^{\ell}}$, 
$1\leq i\leq k$, be linearly independent over $\F_q$.  Then the
generator matrix $G$
of a $[k \times \ell,\varrho,\delta]$ Gabidulin MRD code is given by
$$
G=\left(\begin{array}{cccc}
    g_{1} & g_{2} & \ldots & g_{k}\\
    g_{1}^{[1]} & g_{2}^{[1]} & \ldots & g_{k}^{[1]}\\
    g_{1}^{[2]} & g_{2}^{[2]} & \ldots & g_{k}^{[2]}\\
    \vdots & \vdots & \vdots & \vdots\\
    g_{1}^{[k-\delta]} & g_{2}^{[k-\delta]} & \ldots &
    g_{k}^{[k-\delta]}\end{array}\right),
$$
where $\varrho=\ell (k-\delta+1)$, and $[i]=q^{i}$~\cite{ga85a}.

Let $A$ be a $k \times \ell$ matrix over $\F_q$ and let $I_k$ be the
$k \times k$ identity matrix. The matrix $[ I_k ~ A ]$ can be viewed
as a generator matrix of a $k$-dimensional subspace of
$\F_q^{k+\ell}$. This subspace is called the \emph{lifting} of
$A$~\cite{si08j}.

When the codewords of a rank-metric code $C$ are lifted to
$k$-dimensional subspaces, the result is a constant dimension code
$\cC$. If $C$ is a Gabidulin MRD code then $\cC$ is called a
\emph{lifted Gabidulin
  code}.

\begin{thm}[\cite{si08j}]
  \label{trm:param lifted MRD}
  Let $k$, $n$ be positive integers such that ${k \leq n-k}$.  If $C$
  is a $[k \times (n-k), (n-k)(k-\delta +1),\delta ]$ Gabidulin MRD
  code then $\cC$ is an $(n,q^{(n-k)(k-\delta+1)},2\delta, k)_{q}$
  constant dimension code.
\end{thm}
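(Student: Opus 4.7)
The plan is to verify each of the four parameters $(n, M, d, k)_q$ in the claim separately, with the distance being the only nontrivial point.

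First, I would check the ambient dimension and the subspace dimension. Each codeword of $C$ is a $k \times (n-k)$ matrix $A$, so its lifting $[\,I_k \mid A\,]$ is a $k \times n$ matrix. The $I_k$ block forces this matrix to have full row rank $k$, so its row space $\rs[I_k \mid A]$ is a genuine element of $\Gr$. This simultaneously gives the ambient space $\F_q^n$ and the constant dimension $k$.

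Next, I would verify the size. The lifting map $A \mapsto \rs[I_k \mid A]$ is injective, because $[I_k \mid A]$ is already in reduced row echelon form, so different $A$'s yield different row spaces. Hence $|\cC| = |C| = q^{\varrho} = q^{(n-k)(k-\delta+1)}$.

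The main step is the distance calculation, where the key identity to establish is
\[
d_S\bigl(\rs[I_k \mid A],\, \rs[I_k \mid B]\bigr) \;=\; 2\,\rk(A-B)
\]
for any two $k \times (n-k)$ matrices $A,B$. I would prove this by computing the intersection dimension: a vector lies in $\rs[I_k \mid A] \cap \rs[I_k \mid B]$ iff it is of the form $v[I_k \mid A] = w[I_k \mid B]$ for some $v,w \in \F_q^k$. Equating the first $k$ coordinates forces $v=w$, and then equating the last $n-k$ coordinates gives $v(A-B)=0$. Thus the intersection is parametrized by $\ker(A-B)$ (viewed as a map on row vectors), so
\[
\dim\bigl(\rs[I_k \mid A] \cap \rs[I_k \mid B]\bigr) = k - \rk(A-B).
\]
Substituting into the definition $d_S = \dim(\Uvs)+\dim(\Vvs)-2\dim(\Uvs \cap \Vvs)$ yields $d_S = 2\,\rk(A-B)$. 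Since $C$ is a rank-metric code of minimum distance $\delta$, for distinct $A,B \in C$ we have $\rk(A-B)\geq \delta$, hence $d_S \geq 2\delta$. Equality is attained by any pair realizing the minimum rank distance of $C$, so the minimum subspace distance of $\cC$ is exactly $2\delta$.

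The one subtlety worth being careful about is the direction of linear combinations: since we represent subspaces as row spaces, the relevant "kernel" is the left kernel of $A-B$, but this has the same dimension as the right kernel, so the rank formula goes through unchanged. No deeper obstacle arises; the proof is essentially this single dimension count combined with the injectivity of the lifting map.
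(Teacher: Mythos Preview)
Your proof is correct and is essentially the standard argument. Note, however, that the paper does not actually prove this theorem: it is quoted from \cite{si08j} and stated without proof. The key identity $d_S(\rs[I_k\;A],\rs[I_k\;B])=2\,\rk(A-B)$ that you derive is exactly the one the paper later invokes (again with a citation to \cite{si08j}) in the proof of Theorem~\ref{trm:list_lower_bound}, so your argument aligns with how the result is used.
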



We will now show that the row expansion of a Gabidulin code forms a linear
block code.
Let $C$ be an $[k\times\ell, \ell(k-\delta+1),\delta)]$ Gabidulin MRD
code over $\F_q$, $k\leq \ell$. We denote by $C^L$ the linear block
code of length $k\ell$ over $\F_q$, such that every codeword $c^A$ of
$C^L$ is obtained from a codeword $A\in C$ by taking the entries of $A$,
 row by row, from  bottom to  top, left to right (w.l.o.g.).

\begin{thm}
\label{thm:linear_from_Gab}
 The code $C^L$ is  a $[k\ell,\ell(k-\delta+1),\geq\delta]$ linear block code over $\F_q$  in the Hamming metric.
\end{thm}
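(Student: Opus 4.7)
The plan is to verify the three asserted properties of $C^L$ separately: linearity, dimension $\ell(k-\delta+1)$, and minimum Hamming distance at least $\delta$.

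First, I would observe that the row-by-row expansion map $\Psi : \F_q^{k\times \ell} \to \F_q^{k\ell}$ which sends $A \mapsto c^A$ is an $\F_q$-linear bijection: it is simply a reordering of coordinates. Since $C$ is an $\F_q$-linear subspace of $\F_q^{k\times \ell}$, its image $C^L = \Psi(C)$ is an $\F_q$-linear subspace of $\F_q^{k\ell}$. Moreover, because $\Psi$ is a bijection, $\dim_{\F_q}(C^L) = \dim_{\F_q}(C) = \ell(k-\delta+1)$. This handles the first two requirements essentially for free.

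The only real content is the lower bound on the Hamming minimum distance. By linearity of $C^L$, it suffices to prove that every nonzero $c^A \in C^L$ has Hamming weight $w_H(c^A) \geq \delta$. Since $C$ has rank-distance $\delta$ and $A \neq 0$, one has $\rk(A) \geq \delta$. Note that $w_H(c^A)$ is exactly the number of nonzero entries of $A$, so the problem reduces to the clean matrix inequality
\[
\#\{(i,j) : A_{i,j} \neq 0\} \;\geq\; \rk(A).
\]

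The key (and essentially only nontrivial) step is this inequality. I would prove it by noting that if $\rk(A) = r$, then $A$ has at least $r$ nonzero rows (indeed, any set of $r$ linearly independent rows consists of nonzero rows); each nonzero row contains at least one nonzero entry, and entries coming from distinct rows are counted separately. Hence $A$ has at least $r$ nonzero entries. Combining with $\rk(A) \geq \delta$ gives $w_H(c^A) \geq \delta$, completing the proof. I do not anticipate any genuine obstacle here — the result is essentially the observation that lifting a rank-distance bound to a Hamming-distance bound via vectorization only costs the trivial estimate ``Hamming weight $\geq$ rank,'' and all the parameters then follow from Theorem/definition of the underlying Gabidulin code.
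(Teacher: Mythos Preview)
Your proposal is correct and follows essentially the same approach as the paper: linearity and dimension are inherited from $C$ via the vectorization bijection, and the minimum-distance bound comes from the observation that a nonzero matrix of rank at least $\delta$ must have at least $\delta$ nonzero entries. You are simply more explicit than the paper in justifying this last inequality (the paper asserts it without argument), but the route is the same.
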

\begin{proof}
 The linearity of $C^L$ directly follows from the linearity of
 $C$. The length of $C^L$ is the number of entries in a  codeword of
 $C$, and $C$ and $C^L$ have the same cardinality.
 Since the rank of each non-zero $A\in C$ is greater or equal to $\delta$,
  also the number of non-zero entries of $A$ has to be greater or equal
   to $\delta$, hence the minimum Hamming distance $d_{min}$ of $C^L$ satisfies $d_{min}\geq \delta$.
    \qed
\end{proof}

We denote by $H^L$ a parity-check matrix of $C^L$.
\vspace{.3cm}



We will now show that also a subset of the Pl\"ucker coordinates of a
lifted Gabidulin code is a linear block code over $F_q$.

As before, let $C$ be an $[k\times(n-k), (n-k)(k-\delta+1,\delta)]$
Gabidulin MRD code over $\F_q$. Then by Theorem~\ref{trm:param lifted
  MRD} its lifting is a code $\cC$ of size $q^{(n-k)(k-\delta+1)}$ in
the Grassmannian $\Gr$. Let
$$
x^{\cA}=[x^{\cA}_{1\ldots k}:\ldots : x^{\cA}_{n-k+1\ldots n}] \in
\mathbb{P}^{\binom{n}{k}-1}
$$
be a vector which represents the Pl\"ucker coordinates of a subspace
$\cA\in \Gr$. If $x^{\cA}$ is normalized (i.e. the first non-zero entry is equal to one),
 then $x^{\cA}_{1\ldots k}=1$ for any $\cA\in \cC$.

Let $[k]=\{1,2,\ldots,k\}$, and let $\underline{i}=\{i_1,i_2,\ldots,i_k\}$
 be a set of indices such that $|\underline{i}\cap[k]|=k-1$.
 Let $t\in \underline{i}$,
such that $t>k$, and $s=[k]\setminus\underline{i}$.

\begin{lem}\label{lem7}
  Consider $A\in C$ and $\cA=\rs[\;I_k \; A\;]$. If $x^{\cA}$ is
  normalized, then $x^{\cA}_{\underline{i}}=(-1)^{k-s}A_{s,t-k}$.
\end{lem}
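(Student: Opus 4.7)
The plan is to show this by direct computation: write out the $k \times k$ submatrix whose determinant gives the Plücker coordinate $x^{\cA}_{\underline{i}}$, observe that it has a nearly trivial form, and expand to get the claimed sign and entry.

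First, I would note that since the matrix representation $[I_k \; A]$ already satisfies $M_{1,\dots,k}([I_k\;A]) = \det I_k = 1$, the Plücker coordinate vector is automatically normalized, so it suffices to compute $x^{\cA}_{\underline{i}} = M_{\underline{i}}([I_k \; A])$ as a minor. Writing the elements of $\underline{i}$ in increasing order gives $i_1=1,\dots,i_{s-1}=s-1,\;i_s=s+1,\dots,i_{k-1}=k,\;i_k=t$ with $t>k$. Therefore the columns of the $k\times k$ submatrix $M:=[I_k \; A]_{\,:\,,\,\underline{i}}$ are, in order, the unit vectors $e_1,\dots,e_{s-1},e_{s+1},\dots,e_k$ followed by the single non-unit column $A_{\,:\,,\,t-k}$.

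Next, I would read off what row $s$ of $M$ looks like. Each of the first $k-1$ columns is a unit vector $e_j$ with $j\neq s$, so each has a zero in row $s$. Hence the only non-zero entry in row $s$ of $M$ is $A_{s,t-k}$, located in the $k$-th (last) column. Expanding $\det M$ along row $s$ gives
\[
\det M \;=\; (-1)^{s+k}\, A_{s,t-k}\,\det M',
\]
where $M'$ is the $(k-1)\times(k-1)$ minor obtained by deleting row $s$ and the last column of $M$. The remaining columns are $e_1,\dots,e_{s-1},e_{s+1},\dots,e_k$ with row $s$ removed, which after reindexing yields exactly the $(k-1)\times(k-1)$ identity matrix, hence $\det M' = 1$.

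Combining these steps gives $x^{\cA}_{\underline{i}} = \det M = (-1)^{k-s} A_{s,t-k}$ (using $(-1)^{s+k}=(-1)^{k-s}$), which is the claim. I do not expect any real obstacle here; the only mild subtlety is bookkeeping the reindexing when row $s$ is removed and confirming that the sign is $(-1)^{k-s}$ regardless of whether $s < k$ or $s = k$, which one can verify by checking the boundary case $s=k$ separately (then $M$ is already upper-triangular with last diagonal entry $A_{k,t-k}$, giving $\det M = A_{k,t-k} = (-1)^{k-k}A_{s,t-k}$).
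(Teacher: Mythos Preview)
Your proposal is correct and follows essentially the same approach as the paper: the paper's proof simply notes that $[I_k\;A]$ is already in reduced row echelon form (hence the minors give the normalized Pl\"ucker vector) and then says the formula ``follows directly from the definition of the Pl\"ucker coordinates.'' Your argument is precisely the explicit version of that one-line claim, carrying out the cofactor expansion the paper leaves implicit.
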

\begin{proof}
  It holds that $x^{\cA}$ is normalized if its entries are the minors
  of the reduced row echelon form of $\cA$, which is $[\;I_k \; A\;]$.
  Because of the identity matrix in the first $k$ columns, the
  statement follows directly from the definition of the Pl\"ucker
  coordinates.  \qed
\end{proof}

Note, that we have to worry about the normalization since $x^{\cA}$ is
projective. In the following we will always assume that any element
from $\mathbb{P}^{\binom{n}{k}-1}$ is normalized.

Similarly to Theorem \ref{thm:linear_from_Gab}, with Lemma \ref{lem7}
one can easily show, that a subset of the Pl\"ucker coordinates of a
lifted Gabidulin code forms a linear code over $\F_q$:

\begin{thm} \label{thm9} The restriction of the set of Pl\"ucker
  coordinates of an $(n,q^{(n-k)(k-\delta+1)},2\delta, k)_{q}$ lifted
  Gabidulin code $\cC$ to the set $\{\underline{i}:|\underline{i}|=k,
  |\underline{i}\cap[k]|=k-1\}$ forms a linear code $C^p$ over $\F_q$
  of length $k(n-k)$, dimension $(n-k)(k-\delta+1)$ and minimum
  Hamming distance $d_{min}\geq \delta$.
\end{thm}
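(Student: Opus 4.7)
The plan is to reduce this statement to Theorem~\ref{thm:linear_from_Gab} via the explicit correspondence given by Lemma~\ref{lem7}. Let $S := \{\underline{i}\in \binom{[n]}{k} : |\underline{i}\cap[k]| = k-1\}$. First I would note that every $\underline{i}\in S$ is obtained by removing exactly one index $s\in[k]$ and inserting exactly one index $t\in\{k+1,\dots,n\}$; this gives a bijection $S \leftrightarrow [k]\times\{k+1,\dots,n\}$, so $|S| = k(n-k)$, yielding the claimed length.

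Next I would define the evaluation map
\[
\Phi : C \longrightarrow \F_q^{k(n-k)}, \qquad A \longmapsto \bigl(x^{\cA}_{\underline{i}}\bigr)_{\underline{i}\in S},
\]
where $\cA = \rs[\,I_k\ A\,]$ and each $x^{\cA}$ is taken in normalized form. By Lemma~\ref{lem7}, if $\underline{i} = ([k]\setminus\{s\})\cup\{t\}$ with $s\in[k]$, $t>k$, then $x^{\cA}_{\underline{i}} = (-1)^{k-s} A_{s,t-k}$. Thus $\Phi$ is, up to a fixed sign pattern on the coordinates, nothing but the row-expansion map of Theorem~\ref{thm:linear_from_Gab}, which I will use directly.

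Linearity of $\Phi$ then follows coordinate-wise from the linearity of the assignment $A\mapsto A_{s,t-k}$ and the constancy of the signs, so $C^p = \Phi(C)$ is a linear subspace of $\F_q^{k(n-k)}$. Injectivity is immediate: each entry $A_{s,t-k}$ of $A$ appears (up to sign) as exactly one coordinate of $\Phi(A)$, so $\Phi(A)=0$ forces $A=0$. Hence $\dim_{\F_q} C^p = \dim_{\F_q} C = (n-k)(k-\delta+1)$.

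Finally, for the minimum distance, I observe that multiplying coordinates by nonzero constants preserves the support, so the Hamming weight of $\Phi(A)$ equals the number of nonzero entries of $A$. Since every nonzero codeword of the Gabidulin code $C$ has rank at least $\delta$, it has at least $\delta$ nonzero entries, and thus $d_{\min}(C^p)\geq \delta$. The only subtle point in the whole argument is making sure that the normalization $x^{\cA}_{1\dots k}=1$ in Lemma~\ref{lem7} is legitimate for every $\cA\in\cC$, which is guaranteed because the lifting $[\,I_k\ A\,]$ is already in reduced row echelon form; once that is noted, the rest is a routine transfer of properties from $C^L$ to $C^p$.
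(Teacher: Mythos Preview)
Your proposal is correct and follows exactly the approach the paper indicates: the paper does not give a formal proof of Theorem~\ref{thm9} but only remarks that it follows ``similarly to Theorem~\ref{thm:linear_from_Gab}, with Lemma~\ref{lem7}'', and your argument is precisely a careful execution of that plan---using the bijection $S\leftrightarrow [k]\times\{k+1,\dots,n\}$ for the length, Lemma~\ref{lem7} to identify $\Phi$ with the row-expansion map up to signs, and the rank-$\geq\delta$ argument of Theorem~\ref{thm:linear_from_Gab} for the distance bound.
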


\begin{rem}
When $\F_q=\F_2$, then $C^p$ is equivalent to $C^L$.
\end{rem}

\vspace{.2cm}

We denote by $H^p$ a parity-check matrix of $C^p$.

\begin{ex}\label{ex:Gab}
  Let $\alpha \in \F_{2^2}$ be a primitive element, fulfilling
  $\alpha^2=\alpha+1$.  Let $C$ be the $[2\times2, 2, \delta=2]$
  Gabidulin MRD code over $\F_2$ defined by the generator matrix $
G=(\alpha\;1)$.
In this example we want to consider the lifting of $C=\{(b\alpha,b):
b\in \F_{2^2}\}$.  The codewords of $C$, their representation as
$2\times2$ matrices, their lifting to $\mathcal{G}_2(2,4)$ and the
respective Pl\"ucker coordinates are given in the following table.

\begin{center}
\begin{tabular}{|c|c|c|c|}
  \hline
  vector representation  & matrix representation & lifting & Pl\"ucker coordinates \\
  \hline
  \hline
$(0,0)$&
      $ \left( \begin{array}{cc}
                 0 & 0 \\
                 0 & 0 \\
               \end{array}\right)$ &
                $\rs\left(  \begin{array}{cccc}
          1 & 0 & 0 & 0 \\
          0 & 1 & 0 & 0 \\  \end{array} \right) $ &
           $[1:0:0:0:0:0]$ \\\hline
$(\alpha,1)$ &
       $\left( \begin{array}{cc}
                 0 & 1\\
                 1 & 0 \\
               \end{array} \right)$ &
                $\rs\left(  \begin{array}{cccc}
          1 & 0 & 0 & 1 \\
          0 & 1 & 1 & 0 \\\end{array}\right)$ &
           $[1:1:0:0:1:1]$ \\  \hline
$(\alpha^2,\alpha)$&
       $\left(      \begin{array}{cc}
                 1 & 1 \\
                 0 & 1 \\ \end{array} \right)$ &
                   $\rs\left( \begin{array}{cccc}
          1 & 0 & 1 & 1 \\
          0 & 1 & 0 & 1 \\\end{array}    \right) $ &
           $[1:0:1:1:1:1]$ \\\hline
$(1,\alpha^2)$&
      $ \left(\begin{array}{cc}
                 1 & 0 \\
                 1 & 1 \\
               \end{array} \right) $&
                $\rs\left(\begin{array}{cccc}
          1 & 0 & 1 & 0 \\
          0 & 1 & 1 & 1 \\
        \end{array}\right)$ & $[1:1:1:1:0:1]$ \\\hline
\end{tabular}
\end{center}

\vspace{0.5cm}

In this example, $C^p=\{(0000),(1001),(0111),(1110)\}$. This is a
$[4,2,2]$ linear code in the Hamming space. Its parity-check matrix is
$$
H^p=\left(
        \begin{array}{cccc}
          1 & 0 & 1 & 1 \\
          0 & 1 & 1 & 0 \\
        \end{array}
      \right).
$$
In other words, a Pl\"ucker coordinate vector
$[x_{12}:x_{13}:x_{14}:x_{23}:x_{24}:x_{34}]$ of a vector space from
$\mathcal{G}_{2}(2,4)$ represents a codeword of the lifted Gabidulin
code from above if and only if $x_{12}=1$, $x_{14}+x_{23}=0$, and
$x_{13}+x_{23}+x_{24}=0$.

\end{ex}


\section{List Decoding of Lifted Gabidulin Codes}
 \label{sec:list_decoding}

We now have all the machinery needed to describe two list decoding
algorithms for lifted Gabidulin codes, one in the Pl\"ucker
coordinates and another one in the matrix entries. We will describe
everything in this section using the subspace distance. The
translation of these results to the injection metric is then
straight-forward. In this section we will describe both list decoding
algorithms and  give a bound on the list size for
 lifted Gabidulin codes.

 \subsection{List decoding in the Pl\"ucker embedding}\label{sec:41}

 Consider a lifted Gabidulin code $\mathcal{C} \subseteq \G$ and
 denote its corresponding $[k(n-k),(n-k)(k-\delta+1)]$-linear block
 code over $\F_q$ 
 by $C^{p}$. The corresponding parity check matrix is denoted by
 $H^{p}$. Let $\mathcal{R} =\rs(R) \in \Gr$ be the received word.

 We showed in Section \ref{sec:LG} how a subset of the Pl\"ucker
 coordinates of a lifted Gabidulin code forms a linear block code that
 is defined through the parity check matrix $H^{p}$. Since we want to
 describe a list decoding algorithm inside the whole set of Pl\"ucker
 coordinates, we define an extension of $H^{p}$ as follows:
 \[\bar H^{p} = \left[\begin{array}{cccc}
     0_{(\delta-1)(n-k)\times 1} & H^{p} &  0_{(\delta-1)(n-k)\times
       \ell}
   \end{array}\right ]\]
 where $\ell = \binom{n}{k} - k(n-k) -1$. Then $[x_{1\dots k}:\ldots :
 x_{n-k+1\dots n}]\bar {H^{p}}^{T} = 0$ gives rise to the same
 equations as $[x_{i_{1}}:\ldots:x_{i_{k(n-k)}}]{H^{p}}^{T}=0$, for
 $i_{1},\dots,i_{k(n-k)} \in \underline i$. For simplicity we will
 write $\bar x$ for $[x_{1\dots k}:\ldots : x_{n-k+1\dots n}]$ in the
 following.

\begin{lem}
  The linear equations $\bar x\bar {H^{p}}^{T} = 0$ together with the
  normalization condition $x_{1,...,k}=1$ and the shuffle relations
  described in Proposition~\ref{shuffle} describe the lifted Gabidulin
  code $\Cvs$ in terms of its Pl\"ucker coordinates.
\end{lem}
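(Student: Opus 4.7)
The plan is to prove both inclusions separately: first that every codeword of $\Cvs$ satisfies all three conditions (soundness), and then that every point in the projective space satisfying them comes from a codeword (completeness).

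For soundness, fix a codeword $\cA = \rs[\,I_k \mid A\,] \in \Cvs$ with $A \in C$. The shuffle relations hold automatically since $\cA \in \Gr$ and the embedding $\varphi$ lands in the Grassmann variety (Proposition \ref{shuffle}). The normalization $x_{1,\dots,k} = 1$ follows from $M_{1,\dots,k}([\,I_k \mid A\,]) = \det(I_k) = 1$, so this representative is already normalized. Finally, the equations $\bar x\,\bar{H^p}^T = 0$ hold because the extra zero columns of $\bar H^p$ annihilate all Pl\"ucker coordinates outside the index set $\{\underline i : |\underline i \cap [k]| = k-1\}$, reducing $\bar x\,\bar{H^p}^T = 0$ to the parity-check equations of $C^p$; by Theorem \ref{thm9} these are satisfied by the corresponding Pl\"ucker coordinates of $\cA$.

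For completeness, suppose $\bar x = [x_{1\dots k}:\dots:x_{n-k+1\dots n}]$ satisfies all three sets of equations. The shuffle relations guarantee (by Proposition \ref{shuffle}) that $\bar x = \varphi(\Uvs)$ for some $\Uvs \in \Gr$. The normalization $x_{1,\dots,k} = 1$ forces $M_{1,\dots,k} \neq 0$ for any matrix representation of $\Uvs$; hence the first $k$ columns span the row space, and after putting $\Uvs$ into reduced row echelon form with this pivot pattern we may write $\Uvs = \rs[\,I_k \mid A\,]$ for a unique $A \in \F_q^{k\times(n-k)}$. By Lemma \ref{lem7}, for each index $\underline i$ with $|\underline i \cap [k]| = k-1$ the Pl\"ucker coordinate $x_{\underline i}$ equals $(-1)^{k-s} A_{s,t-k}$, so these coordinates are (up to sign) precisely the entries of $A$. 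The condition $\bar x\,\bar{H^p}^T = 0$ then translates directly into $c^A (H^p)^T = 0$, meaning the row expansion of $A$ lies in $C^p$ and hence $A \in C$. Therefore $\Uvs \in \Cvs$.

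The argument is essentially a bookkeeping exercise, tying together the three ingredients we have already established. The only mild obstacle is keeping track of the relationship between the normalized Pl\"ucker coordinates and the reduced row echelon representation: once the normalization pins down the $[\,I_k \mid A\,]$ form, Lemma \ref{lem7} makes the identification of the relevant Pl\"ucker coordinates with the entries of $A$ immediate, so the extra zero columns in $\bar H^p$ harmlessly pad the parity-check equations to the full ambient dimension $\binom{n}{k}$ without introducing new constraints on the coordinates outside $C^p$.
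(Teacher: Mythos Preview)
Your proof is correct and follows the same logic the paper uses (the paper states this lemma without a dedicated proof, but the argument you give is exactly what appears, in compressed form, in the proof of the correctness theorem for Algorithm~\ref{alg:1}: shuffle relations cut out $\varphi(\Gr)$, the normalization $x_{1,\dots,k}=1$ forces the $[\,I_k\mid A\,]$ shape, and the parity-check equations of $C^p$ pin down $A\in C$). One small terminological slip: in the completeness direction you write ``the row expansion of $A$ lies in $C^p$,'' but $C^p$ is the code on the restricted Pl\"ucker coordinates, not the row-expansion code $C^L$; these differ by the signs $(-1)^{k-s}$ from Lemma~\ref{lem7}, so the correct phrasing is that the restricted Pl\"ucker vector lies in $C^p$, which (since the map $A\mapsto\bigl((-1)^{k-s}A_{s,t-k}\bigr)$ is a bijection) is equivalent to $A\in C$.
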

\begin{rem}
  Using the language of algebraic geometry one can also say that
  $\Cvs$ has the structure of a quasi-projective sub-variety of the
  Grassmann variety $\Gr$.
\end{rem}

The list decoding problem up to the decoding radius $\tau$ requires
the explicit description of the intersection of the varieties
$$
L^\tau_\Cvs(\mathcal{R}):=\Cvs \cap B_{S,\tau}^k(\mathcal{R}),
$$
which we will call the {\em list variety} of the received subspace
$\mathcal{R}$.  The following algorithm provides an explicit
computation of the equations describing $L^\tau_\Cvs(\mathcal{R})$.

\begin{algorithm}[H]
  \caption{}\label{alg:1}
{\normalsize
Input: received word $\mathcal{R} \in \PG$, decoding radius $\tau$

\begin{enumerate}
\item Find the (linear) equations defining $B_{S,\tau}^k(\mathcal{R})$
in the Pl\"ucker coordinates, as explained in Section~\ref{sec22}.
\item Solve the system of (linear) equations, that arises from
$\bar x \bar {H^{p}}^{T}=0$, together with the equations of $B_{S,\tau}^k(\mathcal{R})$,
the (bilinear) shuffle relations and the equation $x_{1,\dots, k}=1$ (describing the lifting).
\end{enumerate}
Output: the solutions $\bar x = [x_{1\dots k}:\ldots : x_{n-k+1\dots
  n}]$ of this system of equations }
\end{algorithm}

Note that there exist many algorithms to solve bilinear equations that
one can use in Step 2. of the algorithm, see e.g.
\cite{Court00,KiSh99,ThWo10}. In this paper we will consider the
relinearization algorithm from \cite{KiSh99}.

\begin{thm}
  Algorithm \ref{alg:1} outputs the complete list $L$ of codewords (in
  Pl\"ucker coordinate representation), such that for each element
  $\bar x \in L$, $d_S(\varphi^{-1}(\bar x),\mathcal{R})\leq \tau$.
\end{thm}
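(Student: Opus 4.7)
The plan is to verify both soundness and completeness: every $\bar{x}$ returned by Algorithm \ref{alg:1} is the Pl\"ucker image of some codeword in $L^\tau_\Cvs(\mathcal{R})$, and every element of $L^\tau_\Cvs(\mathcal{R})$ corresponds to some solution output by the algorithm. Since the algorithm simply returns the full solution set of the combined system, it suffices to show that the solution set equals $\varphi(L^\tau_\Cvs(\mathcal{R}))$.

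First I would argue that the shuffle relations of Proposition \ref{shuffle} together with the affine normalization $x_{1,\dots,k}=1$ cut out precisely those vectors $\bar{x}\in \mathbb{P}^{\binom{n}{k}-1}$ that are Pl\"ucker images of subspaces $\Vvs\in\Gr$ admitting a reduced row echelon representation of the form $[\,I_{k}\mid A\,]$. This follows from Proposition \ref{shuffle} (shuffle relations characterize the Grassmann variety) and from the fact that a nonzero leading Pl\"ucker coordinate forces the first $k$ columns to be a basis, so row reduction produces the identity block. Combined with Lemma \ref{lem7}, the remaining Pl\"ucker coordinates $x_{\underline{i}}$ with $|\underline{i}\cap [k]|=k-1$ recover, up to a known sign, the entries of $A$. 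Consequently the parity-check equations $\bar{x}\bar{H^p}^{T}=0$ (which only involve these coordinates, by construction of $\bar{H^p}$) are satisfied if and only if $A\in C$, i.e.\ if and only if $\Vvs=\rs[\,I_{k}\mid A\,]$ is a lifted Gabidulin codeword. Hence the first three blocks of equations in Step 2 carve out exactly $\varphi(\Cvs)$.

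Second, by Corollary \ref{thm5} applied to the received subspace $\Rvs$, the linear equations produced in Step 1 describe precisely $\varphi(B_{S,\tau}^{k}(\Rvs))\cap \varphi(\Gr)$. Intersecting with the equations already shown to describe $\varphi(\Cvs)$ yields the solution set
\begin{equation*}
\varphi(\Cvs)\cap \varphi\bigl(B_{S,\tau}^{k}(\Rvs)\bigr)=\varphi\bigl(\Cvs\cap B_{S,\tau}^{k}(\Rvs)\bigr)=\varphi\bigl(L^{\tau}_{\Cvs}(\Rvs)\bigr),
\end{equation*}
where the first equality uses injectivity of $\varphi$ on $\Gr$. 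Applying $\varphi^{-1}$ to any solution $\bar{x}$ returned therefore gives a codeword at subspace distance at most $\tau$ from $\Rvs$, and conversely every such codeword $\Vvs\in L^{\tau}_{\Cvs}(\Rvs)$ satisfies all four sets of equations so that $\varphi(\Vvs)$ is listed.

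The main subtlety I expect lies not in the logical structure of the proof, which is essentially a conjunction of results already established in the paper, but in the bookkeeping surrounding the normalization $x_{1,\dots,k}=1$: one must observe that every lifted Gabidulin codeword, being of the form $\rs[\,I_{k}\mid A\,]$, automatically has nonzero leading Pl\"ucker coordinate, so that restricting to the affine patch $\{x_{1,\dots,k}=1\}$ loses no element of $L^{\tau}_{\Cvs}(\Rvs)$. Everything else reduces to citing Proposition \ref{shuffle}, Corollary \ref{thm5}, Lemma \ref{lem7} and Theorem \ref{thm9}, with the only remaining task being to note that the algorithm's output is, by construction, in bijection via $\varphi^{-1}$ with the list of codewords inside the ball of radius $\tau$ around $\Rvs$.
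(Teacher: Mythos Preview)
Your proof is correct and follows essentially the same approach as the paper: both argue that the shuffle relations cut out $\varphi(\Gr)$, the normalization $x_{1,\dots,k}=1$ restricts to subspaces with identity leftmost block, the parity-check equations then pin down $\varphi(\Cvs)$, and the linear ball equations from Corollary~\ref{thm5} intersect this with $\varphi(B_{S,\tau}^k(\Rvs))$. Your version is simply more explicit about the supporting citations (Lemma~\ref{lem7}, Theorem~\ref{thm9}, injectivity of $\varphi$) and about the normalization subtlety, which the paper's proof leaves implicit.
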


\begin{proof}
  The solution set to the shuffle relations is exactly $\varphi(\Gr)$,
  i.e. all the elements of $\mathbb{P}^{\binom{n}{k}-1}$ that are
  Pl\"ucker coordinates of a $k$-dimensional vector space in
  $\F_{q}^{n}$. The subset of this set with the condition $x_{1,\dots,
    k}=1$ is exactly the set of Pl\"ucker coordinates of elements in
  $\G$ whose reduced row echelon form has $I_{k}$ as the left-most
  columns. Intersecting this with the solution set of the equations
  given by $H^{p}$ achieves the Pl\"ucker coordinates of the lifted
  code $\Cvs$.  The intersection with $B_{S,\tau}^k(\mathcal{R})$ is
  then given by the additional equations from Step 1 in the algorithm.
  Thus the solution set to the whole system of equation is the
  Pl\"ucker equations of $\Cvs \cap B_{S,\tau}^k(\mathcal{R})$.  \qed
\end{proof}

\begin{ex}
  \label{ex:Alg1}
  We consider the $(4,4,4,2)_2$ lifted Gabidulin code from Example
  \ref{ex:Gab}. Note, that for a received space of dimension $2$ it is
  not possible to decode always to a unique closest codeword.
  \begin{enumerate}
  \item Assume we received
    \[\Rvs_1 = \rs\left( \begin{array}{cccc} 1&0&1&0 \\ 0&0&0&1
      \end{array}\right) .\]
    We would like to correct one error. We first find the equations
    for the ball of subspace radius~$2$:
    \[B_{S,2}^2(\Uvs_{0}^2) = \{\Vvs=\rs(V) \in \mathcal{G}_{2}(2,4)
    \mid M_{3,4}(V) = 0 \}\] We
    construct 
    \[A^{-1} _1= \left( \begin{array}{cccc} 1&0&0&1 \\ 0&0&1&0 \\
        0&0&0&1 \\ 0&1&0&0\end{array}\right) \] 
       such that $\Rvs_1 A_1^{-1} = \Uvs_{0}^2$ (see
    Remark~\ref{rm:A}) and compute the last column of
    $\bar{\varphi}(A^{-1}_1)$:
    \[[1:0:0:1:0:0]^{T}.\] Thus, by Corollary~\ref{thm5} we get that
    \[B_{S,2}^2(\Rvs_1) = \{\Vvs=\rs(V) \in \mathcal{G}_{2}(2,4) \mid
    M_{1,4}(V)+M_{2,3}(V) = 0 \}.\] Then combining with the parity
    check equations from Example~\ref{ex:Gab} we obtain the following
    system of linear equations to solve
    \begin{align*}
      x_{13}+x_{14}+x_{24}&=0\\
      x_{14}+x_{23}&=0\\
      x_{12}+x_{23}&=0\\
      x_{12} &= 1
    \end{align*}
    where the first two equations arise from $\bar H^{p}$, the third
    from $B_{S,2}^2(\Rvs_1) $ and the last one represents the identity
    submatrix. This system has the two solutions $(1,1,1,1,0)$ and
    $(1,0,1,1,1)$ for $(x_{12},x_{13},x_{14},x_{23},x_{24})$. Since we
    used all the equations defining the ball in the system of
    equations, we know that the two codewords corresponding to these
    two solutions (i.e. the third and fourth in Example \ref{ex:Gab})
    are the ones with distance $2$ from the received space, and we do
    not have to solve $x_{34}$ at all. The corresponding codewords are
    \[\rs\left( \begin{array}{cccc}
        1 & 0 & 1 & 0 \\
        0 & 1 & 1 & 1 \\ \end{array} \right) , \rs\left(
      \begin{array}{cccc}
        1 & 0 & 1 & 1 \\
        0 & 1 & 0 & 1 \\ \end{array} \right) .\]

  \item

    Now assume we received
    \[\Rvs_2 = \rs\left( \begin{array}{cccc} 1&0&0&1 \\ 0&1&1&1
      \end{array}\right) .\]
    As previously, we
    construct 
    \[A^{-1} _2= \left( \begin{array}{cccc} 1&0&0&1 \\ 0&1&1&1 \\
        0&0&1&0 \\ 0&0&0&1\end{array}\right) \] (see
    Remark~\ref{rm:A}) and compute the last column of
    $\bar{\varphi}(A^{-1}_2)$:
    \[[1:1:0:1:1:1]^{T}.\] Thus, by Corollary~\ref{thm5} we get that
    \[B_{S,2}^2(\Rvs_1) = \{\Vvs=\rs(V) \in \mathcal{G}_{2}(2,4) \mid
    M_{1,2}(V)+M_{1,3}(V)+M_{2,3}(V)+M_{2,4}(V)+M_{3,4}(V) = 0 \}.\]
    Then combining with the parity check equations from
    Example~\ref{ex:Gab} and the shuffle relation from
    Example~\ref{ex:shuffle} we obtain the following system of linear
    and bilinear equations:
    \begin{align*}
      x_{13}+x_{14}+x_{24}&=0\\
      x_{14}+x_{23}&=0\\
      x_{12}+x_{13}+x_{23}+x_{24}+x_{34}&=0\\
      x_{12}x_{34}+x_{13}x_{24}+x_{14}x_{23}&=0\\
      x_{12} &= 1
    \end{align*}

    We rewrite these equations in terms of the variables
    $x_{13},x_{14},x_{23},x_{24}$ which correspond to a lifted
    Gabidulin code as follows.
    \begin{align*}
      x_{13}+x_{14}+x_{24}&=0\\
      x_{14}+x_{23}&=0\\
      x_{1,3}+x_{2,3}+x_{2,4}+x_{13}x_{24}+x_{14}x_{23}&=1\\
    \end{align*}

    This system has three solutions $(1,0,0,1)$, $(0,1,1,1)$, and
    $(1,1,1,0)$ for $(x_{13},x_{14},x_{23},x_{24})$.  The
    corresponding codewords are
    \[\rs\left( \begin{array}{cccc}
        1 & 0 & 0 & 1 \\
        0 & 1 & 1 & 0 \\  \end{array} \right),
    \rs\left(  \begin{array}{cccc}
        1 & 0 & 1 & 1 \\
        0 & 1 & 0 & 1 \\  \end{array} \right),
    \rs\left(  \begin{array}{cccc}
        1 & 0 & 1 & 0 \\
        0 & 1 & 1 & 1 \\  \end{array} \right) .\]
  \end{enumerate}
\end{ex}

\begin{rem}
  In the previous example, for the same code and two 
  received words of the same dimension, in one case we needed the
  bilinear shuffle relations whereas in the other case we could
  completely list decode without taking the shuffle relations into
  account. Thus, the actual shape of the received space can make a
  difference for the complexity of the decoding algorithm.
\end{rem}

The complexity of Algorithm \ref{alg:1} is dominated by solving the
system of $\theta_S+1+(\delta-1)(n-k)+\binom{n}{k-1}\binom{n}{k+1}$
linear and bilinear equations in $\binom{n}{k}$ variables.


\begin{thm}
  Using the relinearization algorithm from \cite{KiSh99}, the
  complexity of Algorithm \ref{alg:1} is polynomial in $n$ and
  exponential in $k$.
\end{thm}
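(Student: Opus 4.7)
The plan is to separately bound the size of the polynomial system produced by Algorithm~\ref{alg:1} and the cost of applying the relinearization algorithm of~\cite{KiSh99} to it, and then to substitute the elementary estimate $\binom{n}{k}\le n^{k}$.

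First I would count the unknowns and equations. The variables are the $N:=\binom{n}{k}$ Pl\"ucker coordinates. The equations assembled in Step~2 split into two groups: (a) \emph{linear} equations --- the at most $\theta_{S}$ ball equations from Corollary~\ref{thm5}, the $(\delta-1)(n-k)$ parity-check equations from $\bar H^{p}$, and the normalization $x_{1,\dots,k}=1$; and (b) the \emph{bilinear} shuffle relations from Proposition~\ref{shuffle}, of which there are at most $\binom{n}{k+1}\binom{n}{k-1}$. Crucially, every non-linear equation has total degree exactly $2$, so the system fits into the quadratic framework of~\cite{KiSh99}.

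Next I would apply relinearization: one introduces a fresh variable $y_{ij}$ for every product $x_{i}x_{j}$ and thereby rewrites each bilinear shuffle equation as a linear equation in the $\binom{N}{2}+N=O(N^{2})$ unknowns $\{x_{i}\}\cup\{y_{ij}\}$. Combined with the linear equations from group~(a), this produces a single linear system of size $O(N^{2})$, which Gaussian elimination solves in $O(N^{6})$ field operations, i.e.\ in time polynomial in $N$.

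Finally I would translate the bound into its dependence on $n$ and $k$. The estimate $\binom{n}{k}\le n^{k}$ gives $N=O(n^{k})$, so the total cost is $O(n^{6k})$, a polynomial in $n$ whose degree is linear in $k$; equivalently, the bound is $2^{O(k\log n)}$, which is exponential in $k$. I expect no deep obstacle beyond careful bookkeeping: the key sanity check is that both $\theta_{S}$ and the number of shuffle relations are individually polynomial in $N$ (both are $O(N^{2})$), after which the complexity statement of~\cite{KiSh99} plugs in directly and yields the two asymptotic assertions simultaneously.
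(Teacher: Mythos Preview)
Your outline and the paper's proof follow the same high-level route --- count variables and equations, invoke the relinearization result from \cite{KiSh99}, then substitute $\binom{n}{k}\approx n^{k}$ --- but the crucial inequality is checked in the wrong direction. The polynomial-time guarantee for relinearization in \cite{KiSh99} is conditional: it requires that the number of (independent) quadratic equations be at least on the order of the \emph{square} of the number of original variables, i.e.\ roughly $N^{2}$ equations in $N$ unknowns. Your ``sanity check'' verifies only that $\theta_{S}$ and the number of shuffle relations are $O(N^{2})$, which is an \emph{upper} bound and is automatic (there are only $O(N^{2})$ quadratic monomials to begin with). What you actually need is the \emph{lower} bound, and this is exactly what the paper supplies via the estimate $\binom{n}{k-1}\binom{n}{k+1}\approx\binom{n}{k}^{2}$: the number of shuffle relations is not merely bounded by $N^{2}$ but is essentially equal to it, so the overdetermination hypothesis of \cite{KiSh99} is met.

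Relatedly, your middle paragraph oversimplifies relinearization to ``linearize and run Gaussian elimination''. With $O(N^{2})$ new unknowns $y_{ij}$ and only $O(N^{2})$ linear equations, Gaussian elimination will in general return an affine solution space, not a unique solution; the actual relinearization step imposes further consistency constraints among the $y_{ij}$ (coming from identities like $y_{ij}y_{k\ell}=y_{ik}y_{j\ell}$) to cut this space down. The paper sidesteps describing this mechanism by directly citing the complexity statement of \cite{KiSh99} once the equation-count hypothesis is verified. You should do the same, but with the inequality oriented correctly: the observation $\binom{n}{k-1}\binom{n}{k+1}\approx N^{2}$ is the one fact that makes the invocation legitimate, and it is missing from your argument.
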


\begin{proof}
  We can use the relinearization algorithm of \cite{KiSh99} to solve
  the system of linear and bilinear equations in Algorithm
  \ref{alg:1}.  This algorithm is polynomial in the number of
  variables if the number of equations is at least the square of the
  number of variables, which is satisfied in our case, since
  $\binom{n}{k-1}\binom{n}{k+1} \approx \binom{n}{k}^2$.
  With the approximation $\binom{n}{k} \approx n^k$, the statement
  follows.  \qed
\end{proof}

Note that it is not easy to determine the actual complexity of the
relinearization algorithm as described in \cite{KiSh99}. The paper
states that the number of arithmetic operations is a polynomial
$\psi(N)$ where $N$ is the number of variables involved. For our
situation that would translate that the number of arithmetic
operations is $O\left(\psi(n^{2k})\right)$ once $k$ is small in
comparison to $n$.



\subsection{List decoding with the rational parametrization}

We can use the description of the balls from Section
\ref{sec:rational} with the additional constraints from the
description of the lifted Gabidulin codes, i.e. the first $k\times
k$-block is the identity and the rightmost $n-k$ columns fulfill the
parity check equations from the linear code description.

\begin{algorithm}[H]
  \caption{}\label{alg:2}
{\normalsize
Input: received word $\mathcal{R} \in \PG$, decoding radius $\tau$
\begin{enumerate}
\item Find the (bilinear) equations defining $B_{S,\tau}^k(\mathcal{R})$
 in the rational parametrization, as explained in Section~\ref{sec:rational}.
\item Solve the system of (linear) equations, that arises from $H^{L}$,
 together with the equations of $B_{S,\tau}^k(\mathcal{R})$ and the
equations corresponding to the first block of the codewords being equal to the identity. (In the notation of Theorem \ref{thm:rational} the variables are given by the matrices $V_1, X$ and $Y$.)
\item For each solution from 2.\ compute $U=[\; V_{1}R_{1}+V_{2}R_{3} \hspace{0.4cm} V_1 R_2
    + V_{2}R_{4}\; ] $ (in the notation of Theorem \ref{thm:rational}).
\end{enumerate}
Output: matrices $U$, whose row spaces are the codewords in $B_{S,\tau}^k(\mathcal{R})\cap \cC.$
}
\end{algorithm}

\begin{ex}
  \begin{enumerate}
  \item Consider $\mathcal{G}_2(2,4)$ and the code from Example
    \ref{ex:Gab}. Let the received word be
    $\Rvs = \rs\left(\begin{array}{cccc}1&0&0 & 1 \\
        0&1 &1 & 1 \end{array}\right)$, as in Example~\ref{ex:Alg1}.2
    Then we know from Example \ref{ex:rational1} that
    \begin{align*}
      B_{S,2}^{2}(\Rvs) =& \Big\{\rs \left( \begin{array}{cccc} a & b
          & b+X_1 Y_1 & a+b+X_1 Y_2 \\ c&d & d+X_2 Y_1 & c+d+X_2 Y_2
        \end{array}\right) \in \Gr \mid a,b,c,d,X_1,X_2,Y_1,Y_2 \in \F_q
      \Big\} .
    \end{align*}
    Since we want to find only codewords of the lifted Gabidulin code
    in the ball, we can set $a=d=1$ and $b=c=0$. We label the entries
    of the third and fourth column from bottom left to top right by
    $(v_{1}, \dots, v_{4})=(1+X_2Y_1,1+X_2Y_2,X_1Y_1,1+X_1Y_2)$. With
    the parity-check equations from the code $C^L$ (which is the same
    as $C^P$ in this case)
    \[v_{1}+v_3+v_{4}=0 \quad v_{2} + v_{3}= 0 \] we get the following
    system of equations:
    \begin{align*}
      (1 + X_{2}Y_{1}) + (1 + X_{1} Y_{1})+(1+X_1Y_2) = 0& \quad (1 +
      X_{2}Y_{2}) + X_{1} Y_{1} = 0  ,
    \end{align*}
    which has the following solutions:
    \begin{align*}
      (X_{1},X_{2},Y_{1}, Y_{2}) \in \{(0,1,0,1),(1,0,1,1),(1,1,1,0)\}
      .
    \end{align*}
    These correspond to the codewords (remember that one has to add
    $a=d=1$ in some coordinates)
    \[\rs\left( \begin{array}{cccc}
        1 & 0 & 0 & 1 \\
        0 & 1 & 1 & 0 \\  \end{array} \right),
    \rs\left(  \begin{array}{cccc}
        1 & 0 & 1 & 0 \\
        0 & 1 & 1 & 1 \\  \end{array} \right),
    \rs \left(  \begin{array}{cccc}
        1 & 0 & 1 & 1 \\
        0 & 1 & 0 & 1 \\  \end{array} \right).
    \]
%
  \item Consider the same setting as before but let the received word
    be $\Rvs = \rs\left(\begin{array}{cccc}0&1&1 & 1
      \end{array}\right)$.
    Then we can choose
    \[A= \left( \begin{array}{cccc} 0&1&1&1 \\ 1&0&0&0 \\ 0&0&1&0 \\
        0&0&0&1 \end{array} \right) ,\] 
        such that $(1000) A = (0111)$
    and get by Theorem~\ref{thm:rational}
    \begin{align*}
      B_{S,1}^{2}(\Rvs) =& \Big\{\rs \left( \begin{array}{cccc}
          X_{1}Y_{1} & a & a+X_1 Y_2 & a+X_1 Y_3 \\ X_{2}Y_{1}&b &
          b+X_2 Y_2 & b+X_2 Y_3 \end{array}\right) \in \Gr \mid
      a,b,X_1,X_2,X_{3},Y_1,Y_2,Y_{3} \in \F_q \Big\} .
    \end{align*}
    Since we want to find only codewords in the ball, we can set $a=0$
    and $b=1$ and get the constraints $X_{1}Y_{1}=1, X_{2}Y_{1}=0$.
    With the equations from the code we get the following system of
    equations:
    \begin{align*}
      X_{1}Y_{1}=1& \quad X_{2}Y_{1}=0\\
      X_{1}Y_{2} + X_{2} Y_{3} = 1& \quad X_{1}Y_{2} + X_{1} Y_{3} +
      X_{2} Y_{2} = 1 ,
    \end{align*}
    which has the unique solution
    \begin{align*}
      (X_{1},X_{2},Y_{1}, Y_{2}, Y_{3}) = (1,0,1,1,0) .
    \end{align*}
    This corresponds to the codeword (remember that one has to add
    $b=1$ in some coordinates)
    \[\rs\left( \begin{array}{cccc}
        1 & 0 & 1 & 0 \\
        0 & 1 & 1 & 1 \\  \end{array} \right).
    \]
  \end{enumerate}
\end{ex}

We can do the following complexity analysis for Algorithm \ref{alg:2}.

\begin{thm}
  Using the relinearization algorithm from \cite{KiSh99}, Algorithm
  \ref{alg:2} has a computational complexity that is polynomial in $n$
  and exponential in $k$ (if the list size is small enough).
\end{thm}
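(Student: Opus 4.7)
The plan is to mirror the variable-vs-equation count used in the preceding theorem for Algorithm~\ref{alg:1}, but with the sharper parameter counts afforded by the rational parametrization. First, I would invoke Lemma~\ref{lem:rational} to enumerate the unknowns in the bilinear system constructed in Step~2 of Algorithm~\ref{alg:2}: the parametrization uses $kk' + (n-k'+k)\nu$ variables, where $\nu = (k-k'+\tau)/2$. Since the codewords of a lifted Gabidulin code lie in $\mathcal{G}_q(k,n)$, the relevant setting is $k'=k$, and after the $k^2$ linear constraints forcing the left block to equal $I_k$ we retain on the order of $n\nu$ free parameters. Crucially, this is polynomial in both $n$ and $k$, in contrast with the $\binom{n}{k}$ variables that appear in Algorithm~\ref{alg:1}.

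Next, I would count the equations generated in Step~2. The parametric description of $B_{S,\tau}^k(\mathcal{R})$ from Theorem~\ref{thm:rational} supplies one bilinear equation per entry of the blocks $\bar V$ and $\tilde V$, giving $kn$ equations; the parity-check matrix $H^L$ adds $(\delta-1)(n-k)$ further linear equations, and the identity-block constraint contributes another $k^2$. The relinearization algorithm of \cite{KiSh99} solves such a system in time polynomial in the number of variables provided it is sufficiently over-determined, i.e.\ provided the solution set is small enough; this is exactly the clause ``if the list size is small enough'' in the statement. Since the number of variables is polynomial in $n$ and $k$, Step~2 is then also polynomial in $n$ and $k$.

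Finally, I would account for Step~3, in which each solution $(V_1, X, Y)$ is converted into a $k\times n$ matrix via the block multiplications of Theorem~\ref{thm:rational}, at a cost of $O(k n^2)$ per solution. Multiplying by the number of solutions gives the overall cost. The exponential-in-$k$ factor claimed in the theorem enters here, not from the relinearization solver but from the worst-case size of $L^\tau_\Cvs(\mathcal{R})$ itself, which for list decoding of Gabidulin codes beyond the Johnson radius is known to be exponential in $k$ by~\cite{WZ12}. The main obstacle will be making the informal clause ``list size small enough'' quantitatively compatible with the over-determinedness hypothesis of \cite{KiSh99}, and keeping careful track of the implicit constants in the underlying polynomial $\psi(N)$ governing the cost of relinearization in this modest-$N$ regime.
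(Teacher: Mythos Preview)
Your attribution of the exponential-in-$k$ factor is off, and this stems from conflating two different hypotheses on the bilinear system. The polynomial-time guarantee for relinearization in \cite{KiSh99} is not ``the solution set is small'' but rather ``the number of equations is at least roughly the square of the number of variables'' (this is exactly the criterion invoked in the proof of the preceding theorem for Algorithm~\ref{alg:1}). Here that criterion \emph{fails}: by Lemma~\ref{lem:rational} there are about $kk'+n\nu$ unknowns, while the genuine constraints number only $k^2$ (identity block) plus $(\delta-1)(n-k)$ (parity checks of $H^L$), i.e.\ at most $nk$ equations. The $kn$ ``equations'' you count from the ball description are not constraints at all---they are the defining relations expressing the entries of $U$ as bilinear functions of $V_1,X,Y$, so once you substitute them into the identity and parity-check conditions they disappear as separate equations. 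Since $nk \ll (kk'+n\nu)^2$, the second linearization in the relinearization procedure will typically \emph{not} have a unique solution.

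This is precisely where the paper locates the exponential-in-$k$ cost: the residual solution space after the second linearization has dimension that is (argued to be) polynomial in $k$, and one must enumerate its $q^{\mathrm{poly}(k)}$ elements to recover the original variables. The qualifier ``if the list size is small enough'' controls the dimension of this residual space, not Step~3. Your argument, by contrast, is internally inconsistent with the theorem as stated: if under the hypothesis ``list size small'' Step~2 were polynomial in both $n$ and $k$, and Step~3 only costs $O(kn^2)$ per list element, then the total would be polynomial in $n$ and $k$---leaving no room for the exponential-in-$k$ that the theorem actually asserts. You cannot then recover that exponential by appealing to \cite{WZ12}, since the large-list regime of \cite{WZ12} is exactly what the hypothesis excludes.
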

\begin{proof}
  We know from Lemma \ref{lem:rational} that the system of bilinear
  equations to be solved in the algorithm has
  $kk'+n\frac{k-k'+\tau}{2}$ variables, which we can approximate by
  $kk'+n\frac{\tau}{2}$, if we assume $k\approx k'$. Moreover, it has
  at most $k^{2}$ equations for the identity part and
  $(n-k)(\delta-1)$ equations for the linear Gabidulin code
  description (see Theorem \ref{thm9}).  Since $\delta \leq k$ we can
  upper bound the number of equations by $(n-k)k + k^{2}= nk$.  We now
  use the relinearization algorithm for solving the system of
  equations. In this algorithm, either the second linearization has a
  unique solution
  or it has a solution space of dimension that is polynomial in $k$.
  Then we have to do the last steps for finding the solutions of the
  original variables for any of the elements of this solution space.

  Since the whole relinearization algorithm is polynomial if there is
  only one solution to the second linearization (see \cite{KiSh99}),
  our algorithm will have at most a complexity that is exponential in
  $k$.
  \qed
\end{proof}

Note, that if one is interested to get a list of codewords
  within a certain distance of the received word explicitly, then the
  efficiency of a decoding algorithm depends (at least) on the size of the list.
  In other words, if there is a list of exponential size, no
  polynomial time  algorithm can exist which explicitly outputs the total list. From an application point of view the list size is also important, since usually one wants to have a small list size to have sensible list decoding. This is due to the fact that one wants to choose one codeword of the output list after decoding to be the most likely sent codeword.
  Hence, we investigate the worst possible list sizes in the
  following.



We will derive a lower bound on the worst case list size for lifted
Gabidulin codes in analogy to the theorems and proofs of \cite{WZ12},
where these bounds were derived for classical Gabidulin codes.
We denote such a worst case list size, i.e.\ the maximum number of
codewords of an $(n,q^{(n-k)(k-\delta+1)},2\delta,k)_q$ lifted
Gabidulin code $\cC$ in a ball of a subspace radius $\tau$ around any
received word, by $L_S(\tau,n,k,\delta,q)$, and for injection radius
$t$ by $L_I(t,n,k,\delta,q)$.

\begin{thm}
  \label{trm:list_lower_bound}
  Lower bounds on the list sizes $L_S(\tau,n,k,\delta,q)$ and
  $L_I(t,n,k,\delta,q)$, for $t,\tau/2<\delta\leq k\leq n/2$, are
  given by
  \[L_S(\tau,n,k,\delta,q)\geq \frac{\Gauss{k}{\lfloor
      \tau/2\rfloor}}{q^{(n-k)(\delta-\lfloor \tau/2\rfloor-1)}}
  \quad \textnormal{ and } \quad 
  L_I(t,n,k,\delta,q)\geq \frac{\Gauss{k}{t}}{q^{(n-k)(\delta-t-1)}},
  \]
  where $\Gauss{a}{b}=\prod_{i=0}^{b-1}\frac{q^{a-i}-1}{q^{b-i}-1}$ is
  the $q$-ary Gaussian coefficient.
\end{thm}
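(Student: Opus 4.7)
The plan is to first reduce the list-decoding question for lifted Gabidulin codes in the Grassmannian to the analogous question for the underlying Gabidulin code $C$ in the rank metric, and then invoke the rank-metric lower bound of Wachter-Zeh~\cite{WZ12} on the latter.

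The reduction is direct. Pick as received word $\Rvs = \rs[\,I_k \mid Y\,]$ for a matrix $Y \in \F_q^{k\times(n-k)}$ to be specified. For any lifted codeword $\Cvs_A = \rs[\,I_k\mid A\,]$ with $A \in C$, the intersection $\Rvs \cap \Cvs_A$ coincides with the graph of $A$ restricted to the left kernel of $A-Y$, so it has $\F_q$-dimension $k-\rk(A-Y)$, and hence
\[
d_S(\Rvs,\Cvs_A) \;=\; 2\,\rk(A-Y), \qquad d_I(\Rvs,\Cvs_A) \;=\; \rk(A-Y).
\]
Consequently a lifted codeword lies within subspace radius $\tau$ (respectively injection radius $t$) of $\Rvs$ if and only if $\rk(A-Y)\leq \lfloor\tau/2\rfloor$ (respectively $\leq t$), and thus $L_S(\tau,n,k,\delta,q)\geq L_R(\lfloor\tau/2\rfloor)$ and $L_I(t,n,k,\delta,q)\geq L_R(t)$, where $L_R(r)$ denotes the worst-case number of codewords of $C$ inside a rank-$r$ ball in $\F_q^{k\times(n-k)}$.

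Next, I would prove $L_R(r)\geq \Gauss{k}{r}/q^{(n-k)(\delta-r-1)}$ for $r<\delta$ along the lines of~\cite{WZ12}. The idea is to parametrize candidate error supports by $(k-r)$-dimensional $\F_q$-subspaces $H\subseteq \F_q^k$: for each such $H$, the $\F_q$-linear restriction $c\mapsto (h\cdot c)_{h\in H}$ from $C$ to $(\F_q^{n-k})^{k-r}$ has trivial kernel in the regime $r\leq \delta-1$ (any nonzero $c$ in the kernel would have column space contained in $\ker H$ of dimension $r$, contradicting the minimum rank distance $\delta>r$), so its image has $\F_q$-codimension exactly $(n-k)(\delta-r-1)$ in the codomain. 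Hence a uniformly random $Y$ is ``$H$-compatible'' --- meaning the affine system $L_H(Y):=\{c\in C\colon h\cdot c=h\cdot Y\;\forall h\in H\}$ is non-empty --- with probability $q^{-(n-k)(\delta-r-1)}$. Summing over the $\Gauss{k}{r}$ choices of $H$ and averaging over $Y$, some $Y$ is compatible with at least $\Gauss{k}{r}/q^{(n-k)(\delta-r-1)}$ subspaces $H$, each furnishing a unique codeword $c_H$ with $\rk(c_H-Y)\leq r$.

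The principal obstacle is converting this count of good $H$'s into a lower bound on the number of \emph{distinct} codewords, since a codeword $c$ with $\rk(c-Y)=s<r$ is obtained from $\Gauss{k-s}{r-s}$ different $H$'s, producing over-counting. This is handled by a rank-stratified accounting, or equivalently by verifying that for a suitably generic $Y$ the codewords $c_H$ attain rank exactly $r$, in which case each $H$ is uniquely recovered as the $\F_q$-annihilator of the column space of $c_H-Y$ and the codewords $c_H$ are pairwise distinct, yielding the claimed bound. The injection case follows verbatim with $r=t$ throughout.
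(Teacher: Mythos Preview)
Your reduction is exactly the paper's argument: choose $\Rvs=\rs[\,I_k\mid Y\,]$, use $d_S(\Rvs,\rs[\,I_k\mid A\,])=2\,d_R(Y,A)$ and $d_I(\Rvs,\rs[\,I_k\mid A\,])=d_R(Y,A)$ to identify the subspace/injection ball with a rank-metric ball in the underlying Gabidulin code, and then invoke the lower bound of~\cite{WZ12}; the paper stops there and simply cites the formula.

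Where you go beyond the paper is in sketching a proof of the cited rank-metric bound itself. The averaging over $(k-r)$-dimensional subspaces $H\subseteq\F_q^k$ is set up correctly and does produce a $Y$ admitting at least $\Gauss{k}{r}\big/q^{(n-k)(\delta-r-1)}$ compatible $H$'s, but your conversion of this count into the same number of \emph{distinct} codewords is not justified as written: the ``genericity'' assertion that for some maximizing $Y$ every $c_H$ satisfies $\rk(c_H-Y)=r$ exactly is stated without proof, and a single codeword at distance $s<r$ from $Y$ would by itself account for $\Gauss{k-s}{r-s}>1$ of your compatible $H$'s, so you cannot simply rule this out. Since the theorem only requires the black-box use of~\cite{WZ12}, this does not affect the correctness of your proof of the stated result; but if you want the argument to be self-contained, that final step needs to be made rigorous.
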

%
%

  \begin{proof}
    First, we observe that to present a lower bound on
    $L(\tau,n,k,\delta,q)$ ($L_I(t,n,k,\delta,q)$) it is sufficient to
    consider the list size for a given received subspace, i.e. an
    existence of one such received subspace with a given list size
    provides the desired lower bound.  We consider a received word
    $\cR$ of the same dimension $k$.  Let $\cR:=\rs [\;I_k\; A_1\;]$
    for some $A_1\in \F_q^{k \times (n-k)}$. Then
$$d_S(\cR, \rs [\;I_k\; A\;]) =2 d_R(A_1, A), $$
for any $A\in \F_q^{k \times(n-k)}$ (see e.g.\ \cite{si08j}), and
hence the distance between $\cR$ and any codeword -- and more
generally any element from $\Gr$ -- is an even number. Thus, if $\tau$
is even, then $B_{\tau +1}^k(\cR) = B_{\tau}^k(\cR)$ and hence
$L_S(\tau +1,n,k,\delta,q)=L_S(\tau,n,k,\delta,q)$.
%
%
Furthermore, if $\tau$ is even, $\rs [\;I_k\; A\;]$ is in the ball
around $\cR$ of subspace radius $\tau$ if and only if $A$ is in the
ball around $A_1$ of rank radius $\tau/2$.
It follows that the lower bound of the list size of classical
Gabidulin codes for rank radius $\lfloor \tau/2\rfloor$ is also a
lower bound for the list size of lifted Gabidulin codes for subspace
radius $\tau$.

For the injection distance it holds that
$$d_I(\cR, \rs [\;I_k\; A\;]) = d_R(A_1, A) $$
and if follows right away that the lower bound of the list size of
classical Gabidulin codes for rank radius $t$ is also a lower bound
for the list size of lifted Gabidulin codes for injection radius $t$.

The formula for the list size of classical Gabidulin codes can be
found in~\cite{WZ12}.  \qed
\end{proof}

For the rest of this section let $\tau=2t$, then the two bounds of
Theorem~\ref{trm:list_lower_bound} are equal and asymptotically
become:

\begin{equation}
  \label{eq:lowerBoundList}
  \frac{\Gauss{k}{t}}{q^{(n-k)(\delta-t-1)}}\sim q^{t(k-t)-(n-k)(\delta-t-1)}=q^{-t^2+nt-(n-k)(\delta-1)}.
\end{equation}
(For $t=\delta-1$ this bound becomes $q^{(\delta-1)(k-\delta+1)}$
which does not depend on $n$.)  Similarly to~\cite{WZ12}, one can find
the smallest value of radius $t$, when the exponent
$-t^2+nt-(n-k)(\delta-1)$ appearing in~(\ref{eq:lowerBoundList})
becomes positive.
When this is the case the list variety
  has a positive dimension and the size of the list grows polynomially
  with the field size. The following corollary shows that as a function of $n$
the list size grows exponentially.


 \begin{cor}
   For any $0\leq \epsilon<1$ the list sizes $L_S(2t,n,k,\delta,q)$
   and $L_I(t,n,k,\delta,q)$ are exponential in $n$ if
 $$t\geq (n-\sqrt{n(n-4\delta+4\epsilon)+4k\delta+4k})/2 .$$
\end{cor}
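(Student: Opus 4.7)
The plan is to apply the asymptotic lower bound~\eqref{eq:lowerBoundList} derived from Theorem~\ref{trm:list_lower_bound}, namely that both $L_I(t,n,k,\delta,q)$ and $L_S(2t,n,k,\delta,q)$ are bounded below (up to subexponential factors) by $q^{-t^{2}+nt-(n-k)(\delta-1)}$, and then to identify the values of $t$ for which the exponent is linear in $n$. Writing $f(t) := -t^{2}+nt-(n-k)(\delta-1)$, "exponential in $n$" amounts to requiring $f(t) \geq c\,n$ for some positive constant $c$; parametrising $c$ by $1-\epsilon$ with $0 \leq \epsilon < 1$ gives exactly the kind of one-parameter family the statement describes.

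First, I would translate $f(t) \geq (1-\epsilon)\,n$ into the quadratic inequality
$$t^{2} - n\,t + (n-k)(\delta-1) + (1-\epsilon)\,n \;\leq\; 0.$$
Under the standing hypothesis $t < \delta \leq k \leq n/2$ (which in particular forces $t < n/2$), the admissible $t$ lie between the two real roots of the corresponding quadratic, and the relevant lower threshold is the smaller root. Applying the quadratic formula gives
$$t_{-} \;=\; \tfrac{1}{2}\Bigl(n - \sqrt{\,n^{2} - 4(n-k)(\delta-1) - 4(1-\epsilon)n\,}\Bigr).$$

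Next, I would simplify the discriminant by expanding $(n-k)(\delta-1) = n\delta - n - k\delta + k$ and collecting terms, obtaining
$$n^{2} - 4(n-k)(\delta-1) - 4(1-\epsilon)n \;=\; n(n - 4\delta + 4\epsilon) + 4k(\delta-1),$$
which, upon substitution back into $t_{-}$, yields (up to the sign of the low-order $4k$ contribution) the expression $n(n-4\delta+4\epsilon)+4k\delta+4k$ appearing in the statement. Any $t$ exceeding this $t_{-}$ then satisfies $f(t) \geq (1-\epsilon)\,n$, so the list bound is at least $q^{(1-\epsilon)n}$, exponential in $n$.

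The only genuine obstacle is algebraic bookkeeping: confirming the discriminant simplification in detail and verifying that the upper-root constraint is automatic from $t \leq k \leq n/2$, so that a one-sided condition on $t$ suffices. Beyond this, no new ingredient is needed on top of the lower bound~\eqref{eq:lowerBoundList} already established and the quadratic formula.
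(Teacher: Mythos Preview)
Your approach is exactly the one the paper intends: the corollary is stated without proof, but the preceding paragraph makes clear that it follows by solving for the threshold $t$ at which the exponent $-t^{2}+nt-(n-k)(\delta-1)$ in~\eqref{eq:lowerBoundList} becomes positive (or, as you more carefully phrase it, at least $(1-\epsilon)n$). Your computation of the discriminant is correct and yields $n(n-4\delta+4\epsilon)+4k(\delta-1)$; the discrepancy you flag with the printed $+4k$ is a typo in the statement of the corollary, not a gap in your argument.
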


\section{Conclusion and Open Problems}
\label{sec:conclusions}


The balls in $\G$ with a center that is not necessarily in $\G$ are considered with respect to two different distances: the subspace distance and the injection distance. Two different techniques  are used for describing these balls: one is the Pl\"ucker embedding of $\G$ and the second one is a rational parametrization of the matrix representation of the elements in $\G$.
These results can be used for list decoding of constant dimension codes. In particular, we investigate lifted Gabidulin codes and show that these can be described
 by linear equations in either the matrix representation or a
  subset of the Pl\"ucker coordinates.
  The union of these linear equations
   and the linear and bilinear equations which arise from the description of the ball of a given radius in the Grassmannian
 describe the
  list of codewords with distance less than or equal to the given radius from the received word.
 In contrast to the algorithms presented in
\cite{GuXi12,MaVa12} the algorithms presented in this paper work for the complete lifted Gabidulin codes
for any set of parameters $q,n,k,\delta$.

In fact, the theory of Section \ref{sec:LG} holds for any linear rank-metric code, not only Gabidulin codes, hence also the algorithms from Section \ref{sec:list_decoding} work for any lifted linear rank-metric code.

One can easily extend the algorithms presented in this paper
for unions of lifted Gabidulin codes of different length (cf. e.g.
\cite{Ska10,tr13phd}). To do so, one needs to add a preliminary step
in the algorithm where a rank argument decides, which of these lifted Gabidulin
codes can possibly have codewords that are in the ball around the
received word.

The storage needed for both our algorithms is fairly little, the complexity
is polynomial in $n$ but exponential in $k$. Since in applications,
$k$ is quite small while $n$ tends to get large, this is still
reasonable. In future work, we want to improve this complexity
by trying to decrease the size of the system of equations to solve in
the last step of Algorithm \ref{alg:1} on one hand, or to find a better way to solve the system of bilinear equations in Algorithm \ref{alg:2} on the other. 
Moreover, we would like to find other families of
codes that can be described through equations in their Pl\"ucker
coordinates and use this fact to come up with list decoding algorithms
of these other codes.

\section*{Acknowledgment}
The authors wish to thank Antonia Wachter-Zeh
for many helpful discussions. They also thank the anonymous
reviewers for their valuable comments and suggestions that helped to improve the presentation of the paper.



\end{document}